\documentclass[11pt]{article}

\usepackage[margin=0.95in]{geometry}
\usepackage{amsmath,amssymb,amsthm,mathtools}
\usepackage{graphicx}
\usepackage{booktabs}
\usepackage[authoryear,round]{natbib}
\usepackage[colorlinks=true,linkcolor=blue,citecolor=blue,urlcolor=blue]{hyperref}
\usepackage{cleveref}
\usepackage{setspace}
\theoremstyle{plain}
\newtheorem{theorem}{Theorem}[section]
\newtheorem{lemma}[theorem]{Lemma}
\newtheorem{proposition}[theorem]{Proposition}
\newtheorem{corollary}[theorem]{Corollary}
\theoremstyle{definition}
\newtheorem{remark}[theorem]{Remark}

\newcommand{\Sa}{S_a}
\newcommand{\Su}{S_u}
\newcommand{\Ia}{I_a}
\newcommand{\Iu}{I_u}
\newcommand{\Rzero}{\mathcal{R}_0}
\newcommand{\Sfun}{\mathcal{S}}
\newcommand{\eqR}{\mathbb{R}}
\newcommand{\Gam}{\Gamma}
\newcommand{\kzo}{\kappa_1^{0}}
\newcommand{\kzt}{\kappa_2^{0}}
\newcommand{\dzo}{\delta_1^{0}}
\newcommand{\dzt}{\delta_2^{0}}

\usepackage{titlesec}
\titlespacing*{\section}
  {0pt}{1.5ex plus 0.3ex minus 0.2ex}{0.6ex plus 0.2ex}

\titlespacing*{\subsection}
  {0pt}{1.2ex plus 0.3ex minus 0.2ex}{0.5ex plus 0.2ex}

\titlespacing*{\subsubsection}
  {0pt}{1.0ex plus 0.2ex minus 0.2ex}{0.4ex plus 0.1ex}

\title{Long-term Coexistence of Epidemics and Risk Awareness:\\
Impacts of Adaptive Human Response and Fatigue}

\author{Mozzamil Mohammed$^1$, Abdallah Alsammani$^{2,\ast}$\\[1ex]
\small $^1$College of Business, Jumeira University, Dubai, United Arab Emirates\\
\small $^2$Department of Mathematical Sciences,
\small Delaware State University, Dover, DE, USA\\[1ex]
\small $^\ast$Correspondence: \texttt{aalsammani@desu.edu}}

\date{}

\begin{document}
\maketitle

\begin{abstract}
\noindent
Human behavior shapes epidemic dynamics, yet most models represent it by rescaling the transmission rate, which conflates behavior with biology and removes the memory carried by sustained protective behavior. We develop and analyze a susceptible-infected-recovered-aware model in which awareness, generated by prevalence and eroded by fatigue, moves susceptible and infectious individuals between accessible and inaccessible compartments instead of modifying transmission probabilities. The basic reproduction number is independent of all behavioral parameters, so awareness cannot directly alter invasion; a unique endemic equilibrium exists exactly when $\Rzero>1$, thereby excluding backward bifurcation; and infection and awareness are uniformly persistent, so their coexistence is robust. We obtain an exact decomposition of the reduction in endemic burden into susceptible withdrawal and infectious sequestration, determine the sign of the effect of every behavioral parameter, and show that the endemic state depends on awareness response and fatigue only through their ratio, with prevalence vanishing linearly as fatigue vanishes. In the limit of instantaneous behavioral relaxation, the model collapses to the classical transmission-modulating formulation, whose endemic equilibrium we prove is always stable. At finite relaxation rates, numerical stability and bifurcation analyses show that accessibility can destabilize the endemic equilibrium through a Hopf transition and generate self-sustained epidemic cycles without any imposed delay.

\vspace{2ex}
\noindent\textbf{Keywords:} Adaptive human behavior; Risk awareness; behavioral fatigue; Uniform persistence; Hopf bifurcation; Epidemic modelling
\end{abstract}

\section{Introduction}
\label{sec:intro}

Epidemic trajectories are shaped jointly by pathogen biology and by how populations perceive and respond to risk \citep{Funk2009,Fenichel2011,Poletti2012,OGara2025}. During outbreaks, people reduce contacts when perceived risk is high and resume normal activity as concern fades \citep{Poletti2009,Engeli2025}. Because these adjustments alter who is available for contact, they influence how fast infection spreads, how large outbreaks become, and how long transmission persists. Representing them mechanistically is therefore central to the theory of epidemics and to the design of public health responses \citep{Funk2010,Wang2015}.

The dominant modelling convention is to let behavior act on the transmission coefficient. An information index or awareness variable is constructed from current or recent prevalence, and the force of infection is multiplied by a decreasing function of that variable \citep{Funk2009,dOnofrio2009,Fenichel2011,Weitz2020,Juher2023,Mahmud2025}. This convention is analytically convenient and reproduces the first-order observation that outbreaks are damped by risk perception. It carries three structural consequences that are rarely made explicit. First, behavior and biology are merged into a single coefficient, so the model cannot separate a change in who participates in transmission from a change in how efficiently a contact transmits. Second, the behavioral response is memoryless at the population level: when awareness falls, every individual is instantaneously returned to full participation, and no record of past withdrawal is retained. Third, because protective behavior is not a state that individuals occupy, the model has no notion of the duration of protective behavior and hence no behavioral residence time.

An alternative is to treat protective behavior as an occupied state. Compartmental representations of fear, self-isolation, or self-initiated distancing were introduced by \citet{Epstein2008} and \citet{Perra2011} and have been explored mainly through simulation and on networks \citep{Granell2013,Wang2015}. These formulations retain the feature that matters here: an individual who withdraws leaves the transmitting population for a period governed by a rate, not by the instantaneous value of a signal. What has been missing is an analytical treatment that identifies precisely which epidemiological conclusions depend on this structural choice, and a demonstration that the two representations are not merely different parameterisations of the same dynamics.

Protective behavior is not sustained indefinitely. behavioral, or pandemic, fatigue denotes the progressive erosion of precautionary behavior under persistent risk, driven by habituation, economic pressure, and political conflict \citep{Harvey2020,Jorgensen2022,Chen2024}. Fatigue returns withdrawn individuals to the transmitting population and thereby closes the feedback loop between disease and behavior. In transmission-modulating models fatigue enters as a decay rate of the information index; in a state-based model it also determines the length of time a cohort remains withdrawn, and therefore acquires a dynamical role that has no counterpart in the former class.

Three gaps motivate this work. The \emph{conceptual} gap is that awareness-driven accessibility, that is, awareness-dependent movement between behavioral states that do and do not transmit, has not been separated cleanly from awareness-driven suppression of transmissibility, so it is not known which epidemic phenomena require the former. The \emph{mathematical} gap is that behavioral epidemic models are typically analyzed only to the level of a reproduction number and local stability; global uniqueness of the endemic state, persistence of the behavioral variable itself, and the dependence of the endemic burden on behavioral parameters are rarely established. The \emph{biological} gap is that the mechanisms sustaining long-term coexistence of low-level infection with continued behavioral vigilance, and those generating recurrent waves in the absence of seasonal forcing, remain incompletely resolved.

We formulate and analyze a susceptible-infected-recovered-aware (SIRA) model in which a population-level awareness variable, generated by prevalence and eroded by fatigue, governs transitions of susceptible and infectious individuals between accessible and inaccessible states. Transmission occurs only between accessible individuals, and no transmission probability depends on awareness. Our results are as follows.
\begin{enumerate}
\itemsep0.15em
\item \emph{Threshold neutrality.} $\Rzero=\beta/(\gamma+\mu)$ is independent of all behavioral parameters (\Cref{sec:R0}), and the disease-free equilibrium is globally asymptotically stable when $\Rzero<1$ (\Cref{thm:global_dfe}).
\item \emph{Existence and uniqueness.} Equilibrium analysis reduces to a scalar equation with a strictly decreasing left-hand side, giving a unique endemic equilibrium when $\Rzero>1$ and none otherwise (\Cref{thm:endemic}). Backward bifurcation and bistability are excluded for all admissible parameters.
\item \emph{Robust coexistence.} Infection and awareness are uniformly persistent when $\Rzero>1$ (\Cref{thm:persistence}); coexistence therefore does not depend on which attractor the system settles on.
\item \emph{Exact burden decomposition.} The reduction of endemic prevalence relative to the SIR benchmark splits additively into susceptible withdrawal and infectious sequestration (\Cref{thm:decomp}), and the direction of the effect of every behavioral parameter is determined analytically (\Cref{thm:monotone}).
\item \emph{A single behavioral control parameter.} The endemic state depends on awareness response $\rho$ and fatigue $\omega$ only through the ratio $\theta=\omega/\rho$, and $I^{*}=\theta A_{\infty}/(1-A_{\infty})+o(\theta)$ as $\theta\to0$ (\Cref{thm:theta,thm:lowfatigue}). Behavioral fatigue is thus the only obstruction to driving prevalence to zero, even though awareness never changes the invasion threshold.
\item \emph{Accessibility is not transmission modulation.} In the limit of instantaneous behavioral relaxation the model reduces to the classical transmission-modulating formulation (\Cref{prop:fastlimit}), whose endemic equilibrium we prove is locally asymptotically stable for every admissible parameter set (\Cref{thm:memoryless}). \item \emph{Accessibility is not transmission modulation.} 
In the limit of instantaneous behavioral relaxation, the model reduces to the
classical transmission-modulating formulation (\Cref{prop:fastlimit}), whose
endemic equilibrium we prove to be locally asymptotically stable for every
admissible parameter set (\Cref{thm:memoryless}). At finite relaxation rates,
numerical eigenvalue and bifurcation analyses show that the endemic equilibrium
of the SIRA model can lose stability through a Hopf transition, with direct
simulations revealing stable self-sustained epidemic cycles
(\Cref{sec:hopf}). Thus, finite behavioral residence times can generate
recurrent epidemic dynamics without an exogenous delay, memory kernel, or
seasonal forcing.
\end{enumerate}
Item 6 is the sharpest statement of the difference between the two representations: they share $\Rzero$, share the qualitative direction of every endemic comparative static, and differ in the stability of the endemic state.

\Cref{sec:model} presents the model, \Cref{sec:analysis} the analysis, \Cref{sec:results} the numerical results, and \Cref{sec:discussion} the discussion.

\section{Model}
\label{sec:model}

\subsection*{States}

The population is normalised to unity and partitioned into susceptible, infectious, and recovered classes, $S+I+R=1$. Susceptible and infectious individuals are further classified by \emph{accessibility}, the behavioral state that determines whether an individual participates in social contact: $S=\Sa+\Su$ and $I=\Ia+\Iu$, where the subscripts $a$ and $u$ denote accessible and inaccessible individuals. Accessible individuals mix normally and can transmit or acquire infection; inaccessible individuals have withdrawn from contact in response to perceived risk and do not participate in transmission. Accessibility is a behavioral attribute, not a clinical one: it does not affect the course of infection, only exposure.
\, \\

\begin{figure}[!ht]
    \centering
    \includegraphics[
        width=0.85\textwidth,
        keepaspectratio
    ]{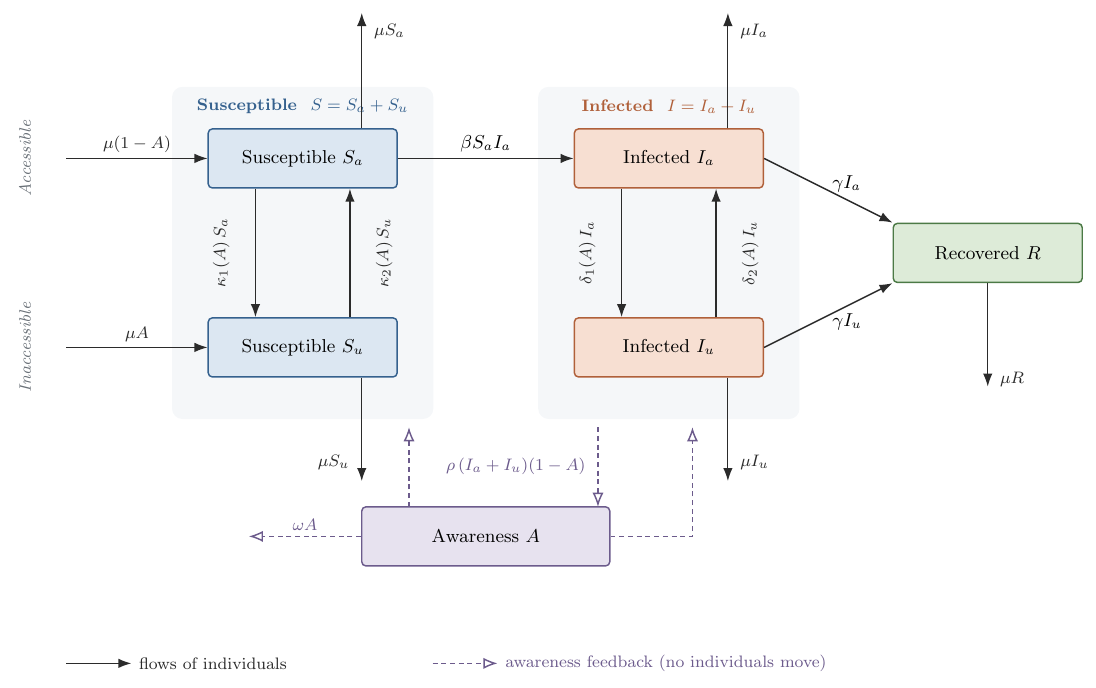}
\caption{\textbf{Structure of the SIRA model.} Susceptible and infectious individuals occupy accessible ($\Sa,\Ia$) or inaccessible ($\Su,\Iu$) behavioral states. Transmission at rate $\beta$ occurs only between accessible susceptible and accessible infectious individuals; recovery ($\gamma$) and natural mortality ($\mu$) act uniformly across accessibility states. Births enter the susceptible class and are split between accessible and inaccessible states in proportion $1-A$ and $A$. Awareness $A$ is generated by prevalence at rate $\rho$, decays through behavioral fatigue at rate $\omega$, and governs the accessibility transition rates $\kappa_1,\kappa_2$ (susceptible) and $\delta_1,\delta_2$ (infectious). No rate of transmission per contact depends on $A$: awareness acts only on who is available to transmit.}
\label{fig:flowchart}
\end{figure}

\subsection*{Awareness}

Let $A(t)\in[0,1]$ denote the fraction of the population that perceives sufficient risk to reduce contact. Awareness accumulates in proportion to prevalence at rate $\rho$, modulated by the fraction $1-A$ that has not yet responded, and is lost at rate $\omega$ through habituation, economic and social pressure, and declining motivation to maintain precautions \citep{Harvey2020,Jorgensen2022,Chen2024}. We refer to $\rho$ as the awareness response and to $\omega$ as behavioral fatigue.

\subsection*{Awareness-driven accessibility}

Awareness governs movement between behavioral states. Accessible susceptible individuals withdraw at rate $\kappa_1(A)=\kzo A$ and inaccessible susceptible individuals return at rate $\kappa_2(A)=\kzt(1-A)$; the corresponding rates for infectious individuals are $\delta_1(A)=\dzo A$ and $\delta_2(A)=\dzt(1-A)$. Withdrawal rates increase and return rates decrease with awareness, and each is bounded by its baseline maximum. The reciprocals $1/\kzt$ and $1/\dzt$ define baseline behavioral relaxation timescales. At a fixed awareness level $A<1$, the corresponding instantaneous return timescales are $[\kzt(1-A)]^{-1}$ and $[\dzt(1-A)]^{-1}$, respectively. These finite relaxation timescales are the structural feature that distinguishes the present formulation from models in which awareness acts instantaneously by multiplying the transmission rate; \Cref{prop:fastlimit} makes this connection precise.

\subsection*{Transmission and demography}

Transmission occurs at rate $\beta$ through contacts between $\Sa$ and $\Ia$ only. Recovery occurs at rate $\gamma$ irrespective of accessibility. Births and deaths occur at the common per capita rate $\mu$; deaths act uniformly on all classes, and births enter the susceptible class, split between accessible and inaccessible states according to the prevailing awareness level.

\subsection*{Governing equations}

The model is
\begin{align}
\frac{d\Sa}{dt} &= (1-A)\mu-\beta \Sa \Ia+\kappa_2(A)\Su-\kappa_1(A)\Sa-\mu \Sa, \label{eq:1}\\
\frac{d\Su}{dt} &= A\mu+\kappa_1(A)\Sa-\kappa_2(A)\Su-\mu \Su, \label{eq:2}\\
\frac{d\Ia}{dt} &= \beta \Sa \Ia-\delta_1(A)\Ia+\delta_2(A)\Iu-(\gamma+\mu)\Ia, \label{eq:3}\\
\frac{d\Iu}{dt} &= \delta_1(A)\Ia-\delta_2(A)\Iu-(\gamma+\mu)\Iu, \label{eq:4}\\
\frac{dR}{dt} &= \gamma(\Ia+\Iu)-\mu R, \label{eq:5}\\
\frac{dA}{dt} &= \rho(\Ia+\Iu)(1-A)-\omega A. \label{eq:6}
\end{align}
Setting $\kzo=\kzt=\dzo=\dzt=0$ leaves all individuals accessible and recovers the classical SIR model with demography; awareness then decouples and decays. Parameter values are listed in \Cref{tab:params} and follow the ranges used in the behavioral epidemic literature for an acute respiratory infection with a mean infectious period of five days.

\begin{table}[htbp]
\centering
\caption{Model parameters, baseline values, and ranges explored.}
\label{tab:params}
\begin{tabular}{@{}llll@{}}
\toprule
\textbf{Symbol} & \textbf{Description} & \textbf{Baseline / range} & \textbf{Unit} \\
\midrule
$\beta$   & Transmission rate                       & 0.6 / 0.2--2      & day$^{-1}$ \\
$\gamma$  & Recovery rate                           & 0.2               & day$^{-1}$ \\
$\mu$     & Birth and death rate                    & 0.02              & day$^{-1}$ \\
$\kzo$    & Susceptible withdrawal rate             & 0.4 / 0.01--0.8   & day$^{-1}$ \\
$\kzt$    & Susceptible return rate                 & 0.4 / 0.01--0.8   & day$^{-1}$ \\
$\dzo$    & Infectious withdrawal rate              & 0.2 / 0.01--0.6   & day$^{-1}$ \\
$\dzt$    & Infectious return rate                  & 0.2 / 0.01--0.6   & day$^{-1}$ \\
$\rho$    & Awareness response rate                 & 0.5 / 0.05--1     & day$^{-1}$ \\
$\omega$  & behavioral fatigue rate                & 0.01 / 0.001--0.5 & day$^{-1}$ \\
\bottomrule
\end{tabular}
\end{table}

\section{Mathematical analysis}
\label{sec:analysis}

Throughout we write $I=\Ia+\Iu$ and
\begin{equation}
\Phi(A)=1+\frac{\delta_1(A)}{\delta_2(A)+\gamma+\mu}\;\ge\;1 ,
\label{eq:Phi}
\end{equation}
which is the factor by which awareness-driven sequestration inflates the mean time an infection spends outside the transmitting pool relative to a fully accessible infection. We also set $\theta=\omega/\rho$, the fatigue-to-response ratio.

\subsection{Well-posedness and invariant region}

\begin{proposition}
\label{prop:wellposed}
The set
\begin{equation}
\Gam=\Big\{(\Sa,\Su,\Ia,\Iu,R,A)\in\eqR^{6}_{+}:\ \Sa+\Su+\Ia+\Iu+R=1,\ 0\le A\le1\Big\}
\label{eq:region}
\end{equation}
is compact and positively invariant under \eqref{eq:1}--\eqref{eq:6}, and every initial condition in $\Gam$ generates a unique solution defined for all $t\ge0$.
\end{proposition}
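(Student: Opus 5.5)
The plan is the standard three-part argument: local existence, invariance of $\Gam$, and continuation to all $t\ge0$. Compactness of $\Gam$ is immediate. It is the intersection of the closed simplex $\{x\in\eqR^5_+:\sum x_i=1\}$ with the interval $[0,1]$ in the $A$ coordinate, so it is closed and bounded in $\eqR^6$.

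\textbf{Local existence.} The right-hand side of \eqref{eq:1}--\eqref{eq:6} is polynomial in $(\Sa,\Su,\Ia,\Iu,R,A)$, because $\kappa_i$ and $\delta_i$ are affine in $A$. It is therefore $C^1$ and locally Lipschitz on $\eqR^6$. Picard--Lindel\"of then gives a unique maximal solution on some $[0,T_{\max})$ for every initial condition.

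\textbf{Conservation law.} Adding \eqref{eq:1}--\eqref{eq:5} cancels the transmission, transfer and recovery terms. What remains is
\[
\frac{d}{dt}\big(\Sa+\Su+\Ia+\Iu+R\big)=\mu-\mu(\Sa+\Su+\Ia+\Iu+R),
\]
so $N=\Sa+\Su+\Ia+\Iu+R$ satisfies $N(t)-1=(N(0)-1)e^{-\mu t}$. Hence $N\equiv1$ whenever $N(0)=1$.

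\textbf{Invariance of the faces.} I will use the tangency (quasi-positivity) criterion: on each boundary face the vector field must point into $\Gam$. The cleanest route is to first work on the larger set with $0\le A\le1$ and all other coordinates nonnegative, checking each face in turn.
\begin{itemize}
\item If $A=0$, then $\dot A=\rho I\ge0$.
\item If $A=1$, then $\dot A=-\omega<0$.
\item If $\Sa=0$, then $\dot\Sa=(1-A)\mu+\kzt(1-A)\Su\ge0$, using $A\le1$.
\item If $\Su=0$, then $\dot\Su=A\mu+\kzo A\Sa\ge0$.
\item If $R=0$, then $\dot R=\gamma I\ge0$.
\end{itemize}
For the infectious block I avoid a face argument and use an explicit linear structure instead. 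Along a given solution, $(\Ia,\Iu)$ satisfies $\dot x=M(t)x$ where
\[
M(t)=\begin{pmatrix}\beta\Sa-\delta_1-(\gamma+\mu) & \delta_2\\ \delta_1 & -\delta_2-(\gamma+\mu)\end{pmatrix}.
\]
This matrix is Metzler, since its off-diagonal entries $\delta_1,\delta_2$ are nonnegative while $A\in[0,1]$. Comparison for cooperative linear systems then gives $\Ia,\Iu\ge0$, and strict positivity if the initial data are positive.

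\textbf{Main obstacle and how I will handle it.} The difficulty is circularity: the sign conditions $1-A\ge0$ and $A\ge0$ are themselves used to prove invariance. I will handle it with a first-exit-time argument. Suppose the solution leaves $\Gam$, and let $t^*$ be the supremum of the times up to which it stays in $\Gam$. On $[0,t^*]$ all the sign facts hold. The face estimates, applied with strict inequality after perturbing the initial data or the birth term by $\varepsilon$, show the solution cannot cross any face at $t^*$. Letting $\varepsilon\to0$ and using continuous dependence on data yields the contradiction.

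An alternative that avoids perturbation is to write each equation in variation-of-constants form. For example,
\[
\Sa(t)=\Sa(0)e^{-\int_0^t(\beta\Ia+\kappa_1+\mu)}+\int_0^t(\cdots)\,e^{-\int_s^t(\cdots)}\,ds,
\]
where the forcing integrand is nonnegative on $[0,t^*]$. Similarly $\dot A\le\rho(1-A)-\omega A$ and $\dot A\ge-\omega A$, which give $0\le A\le1$ directly.

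\textbf{Global existence.} Since solutions stay in the compact set $\Gam$, they remain bounded. A maximal solution cannot blow up in finite time, so $T_{\max}=\infty$.
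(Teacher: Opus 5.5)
Your proof is correct and takes the same route as the paper. Both use inward-pointing checks on each face, the conservation law $\dot N=\mu(1-N)$, Picard--Lindel\"of for the polynomial vector field, and compactness of $\Gamma$ to rule out blow-up; your Metzler treatment of $(I_a,I_u)$ and your first-exit argument are only a more explicit version of the paper's face checks. The circularity you flag is exactly what Nagumo's tangency theorem on the closed polyhedral set $\Gamma$ settles, because the weak face inequalities need only hold at points of $\Gamma$, where every sign fact is available; the paper relies on this implicitly. If you run the perturbation yourself, add a small $\varepsilon\in(0,\omega)$ to every component: perturbing only the birth term does not make the $I_a$, $I_u$, $R$ or $A=0$ face conditions strict.
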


\begin{proof}
On each face of the nonnegative orthant the vector field points inward: at $\Sa=0$, $d\Sa/dt=(1-A)\mu+\kappa_2(A)\Su\ge0$; at $\Su=0$, $d\Su/dt=A\mu+\kappa_1(A)\Sa\ge0$; at $\Ia=0$, $d\Ia/dt=\delta_2(A)\Iu\ge0$; at $\Iu=0$, $d\Iu/dt=\delta_1(A)\Ia\ge0$; and at $R=0$, $dR/dt=\gamma I\ge0$. At $A=0$, $dA/dt=\rho I\ge0$ and at $A=1$, $dA/dt=-\omega<0$, so $[0,1]$ is invariant for $A$. Summing \eqref{eq:1}--\eqref{eq:5}, the accessibility exchange terms cancel pairwise and $N=\Sa+\Su+\Ia+\Iu+R$ obeys $dN/dt=\mu(1-N)$, so $N\equiv1$ is invariant. The right-hand sides are polynomial and hence locally Lipschitz on the compact set $\Gam$, so existence, uniqueness, and global forward continuation follow from the Picard--Lindel\"of theorem.
\end{proof}

Biologically, the awareness feedback redistributes individuals between behavioral states without creating or destroying them, so the demographic balance is untouched by behavior.

\subsection{Disease-free equilibrium and threshold neutrality}
\label{sec:R0}

Setting $I=0$ in \eqref{eq:6} forces $A\to0$, and \eqref{eq:1}--\eqref{eq:2} then give a unique disease-free equilibrium
\begin{equation}
E_0=(1,0,0,0,0,0).
\end{equation}
In the absence of infection there is nothing to be aware of, awareness vanishes, and the population is fully accessible. This is the structural reason that behavior cannot influence invasion in this framework.

\begin{proposition}
\label{prop:R0}
With $\Ia,\Iu$ as infected compartments, the next-generation construction \citep{vandenDriessche2002} gives, with $\varrho(\cdot)$ denoting the spectral radius,
\begin{equation}
F=\begin{pmatrix}\beta & 0\\ 0 & 0\end{pmatrix},\qquad
V=\begin{pmatrix}\gamma+\mu & -\dzt\\ 0 & \dzt+\gamma+\mu\end{pmatrix},\qquad
\Rzero=\varrho\big(FV^{-1}\big)=\frac{\beta}{\gamma+\mu}.
\label{eq:R0}
\end{equation}
$\Rzero$ is independent of $\kzo,\kzt,\dzo,\dzt,\rho$ and $\omega$.
\end{proposition}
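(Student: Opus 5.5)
The plan is to apply the next-generation recipe of \citet{vandenDriessche2002} directly at $E_0=(1,0,0,0,0,0)$, with $x=(\Ia,\Iu)$ as the infected compartments. First I will split the right-hand sides of \eqref{eq:3}--\eqref{eq:4} into new infections and transitions. The only new-infection term is $\mathcal F=(\beta\Sa\Ia,\,0)^{\top}$. Every other term goes into the transition vector
\[
\mathcal V=\bigl(\delta_1(A)\Ia-\delta_2(A)\Iu+(\gamma+\mu)\Ia,\;-\delta_1(A)\Ia+\delta_2(A)\Iu+(\gamma+\mu)\Iu\bigr)^{\top}.
\]
This is the natural split, because accessibility moves do not create infections. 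I will then check the standing hypotheses (A1)--(A5) of the theorem:
\begin{itemize}
\item nonnegativity of $\mathcal F$ and of the inflow parts of $\mathcal V$ follows from $\Gam$ and \Cref{prop:wellposed};
\item the disease-free subspace $\{\Ia=\Iu=0\}$ is invariant;
\item the disease-free system is locally asymptotically stable at $E_0$.
\end{itemize}
For the last point, with $I=0$ the equation for $A$ is $\dot A=-\omega A$. The $(\Sa,\Su)$ block is then linear and cooperative, with eigenvalues having negative real parts, and $R$ decays at rate $\mu$.

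Next I will compute the Jacobians of $\mathcal F,\mathcal V$ with respect to $(\Ia,\Iu)$ at $E_0$. Here $\Sa=1$ and $A=0$, so $\delta_1(0)=0$ and $\delta_2(0)=\dzt$. This gives
\[
F=\begin{pmatrix}\beta&0\\0&0\end{pmatrix},\qquad V=\begin{pmatrix}\gamma+\mu&-\dzt\\0&\dzt+\gamma+\mu\end{pmatrix},
\]
which is the matrix pair in \eqref{eq:R0}. One point needs care. $A$, $\Sa$ and $\Su$ also depend on the state, but they are non-infected variables, so their derivatives do not enter $F$ and $V$. I also need to note that $\delta_1(A)\Ia$ has zero derivative in $\Ia$ at $A=0$. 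That is where $\dzo$ drops out.

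Since $V$ is upper triangular with positive diagonal, it is a nonsingular M-matrix, and its inverse is
\[
V^{-1}=\begin{pmatrix}\frac{1}{\gamma+\mu}&\frac{\dzt}{(\gamma+\mu)(\dzt+\gamma+\mu)}\\0&\frac{1}{\dzt+\gamma+\mu}\end{pmatrix}.
\]
Because $F$ has only its $(1,1)$ entry nonzero, $FV^{-1}$ has a zero second row. Its eigenvalues are therefore $0$ and $\beta/(\gamma+\mu)$, so $\Rzero=\beta/(\gamma+\mu)$.

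Independence from $\kzo$, $\kzt$, $\rho$ and $\omega$ holds because none of these appear in $F$ or $V$. Independence from $\dzo$ holds because $\delta_1(0)=0$. Independence from $\dzt$ holds because it only enters off the first diagonal entry, which the spectral radius does not see.

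The only step with real content is verifying hypothesis (A5): the disease-free subsystem must be stable, which needs $A\to0$. Even that is routine, because the system is triangular, so I expect no genuine obstacle.
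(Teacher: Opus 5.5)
Your proposal is correct and follows the same route as the paper: the same $\mathcal{F}/\mathcal{V}$ split at $E_0$ gives the same $F$ and $V$. The neutrality mechanism is also the same: $\delta_1(0)=0$ removes $\dzo$, and $\dzt$ appears only in the off-diagonal entry of the rank-one matrix $FV^{-1}$. Your explicit check of the van den Driessche--Watmough hypotheses, via the triangular linearization of the disease-free subsystem, is extra care that the paper leaves implicit.
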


The cancellation in \eqref{eq:R0} is not accidental. At $E_0$ we have $\delta_1(0)=0$, so a newly introduced infection cannot be sequestered; the only accessibility transition present is the return flow $\dzt$, which appears identically in the production and removal terms and cancels. Awareness is a lagging indicator of prevalence: at the instant of invasion, $I=0$ implies $A=0$, so no behavioral machinery is active. Awareness-driven accessibility therefore cannot prevent an epidemic from starting; it can only shape what happens afterwards. Because $\Rzero$ depends on three parameters only, its normalised forward sensitivity indices are $\Upsilon^{\Rzero}_{\beta}=1$, $\Upsilon^{\Rzero}_{\gamma}=-\gamma/(\gamma+\mu)$ and $\Upsilon^{\Rzero}_{\mu}=-\mu/(\gamma+\mu)$, with all behavioral indices identically zero. Threshold neutrality is thus a structural property, not a numerical coincidence.

\begin{theorem}
\label{thm:local_dfe}
$E_0$ is locally asymptotically stable if $\Rzero<1$ and unstable if $\Rzero>1$.
\end{theorem}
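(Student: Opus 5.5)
The plan is to linearise \eqref{eq:1}--\eqref{eq:6} at $E_0$ on the invariant simplex $\Gam$, eliminating $R$ via $R=1-\Sa-\Su-\Ia-\Iu$ (or equivalently keeping all six variables and noting that $dN/dt=\mu(1-N)$ contributes the eigenvalue $-\mu$). At $E_0$ we have $A=0$, so $\kappa_1=\delta_1=0$, $\kappa_2=\kzt$, $\delta_2=\dzt$. The derivatives of the rate functions with respect to $A$ multiply $\Su$ and $\Iu$, or $\Ia$ and $\Iu$; all of these vanish at $E_0$, except in the $\Sa,\Su$ equations. The Jacobian is then block triangular if we order the variables as $(\Ia,\Iu,A,\Sa,\Su)$. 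The infected block is
\[
J_I=\begin{pmatrix}\beta-(\gamma+\mu) & \dzt\\ 0 & -(\dzt+\gamma+\mu)\end{pmatrix}=F-V,
\]
since $\partial(\beta\Sa\Ia)/\partial\Sa=\beta\Ia=0$ at $E_0$. The rows for $\Ia,\Iu$ therefore depend only on $(\Ia,\Iu)$. The $A$ row is $\rho\Ia+\rho\Iu-\omega A$, so it depends only on infected variables and $A$. The $(\Sa,\Su)$ rows depend on everything, but their own block is
\[
\begin{pmatrix}-\mu-\kzt\cdot 0 & \kzt\\ 0 & -\kzt-\mu\end{pmatrix},
\]
using $\kappa_1(0)=0$, $\kappa_2(0)=\kzt$ and $\Su=0$. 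The $A$-derivative terms $-\mu-\kzo\Sa$ and $\mu+\kzo\Sa$ land in the $A$ column, which is off the diagonal block.

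Ordering the variables as $(\Ia,\Iu,A,\Sa,\Su)$ makes the Jacobian block lower triangular, and its eigenvalues are:
\begin{itemize}
\item $\beta-(\gamma+\mu)$ and $-(\dzt+\gamma+\mu)$ from $J_I$, which is upper triangular;
\item $-\omega$ from the awareness equation;
\item $-\mu$ and $-(\kzt+\mu)$ from the susceptible block, plus $-\mu$ from $R$ or $N$ if retained.
\end{itemize}
All eigenvalues except $\beta-(\gamma+\mu)$ are strictly negative for the admissible positive parameters. That eigenvalue is negative exactly when $\Rzero=\beta/(\gamma+\mu)<1$ and positive when $\Rzero>1$, so the conclusion follows by the linearisation principle. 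Alternatively, one can simply invoke Theorem 2 of \citet{vandenDriessche2002} together with \Cref{prop:R0}. That requires checking their hypotheses (A1)--(A5), in particular that the disease-free subsystem is locally stable, which is the same susceptible/awareness block computation above.

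The main obstacle is bookkeeping rather than substance. One must verify the triangular structure carefully, including that the $A$-dependent birth split $(1-A)\mu$, $A\mu$ and the $\kappa$ terms do not feed back into the infected rows. One must also handle the constraint $N\equiv1$ consistently, so that no spurious zero eigenvalue appears. Taking $\Gam$ as the state space and eliminating $R$, or noting that $N$ is a decoupled direction with eigenvalue $-\mu$, resolves the constraint issue.
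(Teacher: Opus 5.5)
Your proposal is correct and is essentially the paper's argument: linearise at $E_0$, exploit the block triangular structure, and read off the spectrum $\{\beta-(\gamma+\mu),\,-(\dzt+\gamma+\mu),\,-\omega,\,-\mu,\,-(\kzt+\mu)\}$. The only difference is bookkeeping: the paper eliminates $\Sa$ via the constraint and obtains $-\mu$ from the $R$ equation, whereas you drop the decoupled $R$ equation and obtain $-\mu$ from the $\Sa$ diagonal entry. There is one harmless slip: that entry should read $-\mu-\kzo\cdot 0$, since $\kappa_1(0)=\kzo\cdot 0$, although the value $-\mu$ you get is right.
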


\begin{proof}
Eliminating $\Sa$ through the conservation constraint and linearising in $(\Su,\Ia,\Iu,R,A)$ at $E_0$, where $\kappa_1(0)=\delta_1(0)=0$, $\kappa_2(0)=\kzt$ and $\delta_2(0)=\dzt$, gives a block triangular Jacobian with eigenvalues
\begin{equation}
\lambda_1=-(\kzt+\mu),\quad \lambda_2=\beta-(\gamma+\mu),\quad \lambda_3=-(\dzt+\gamma+\mu),\quad \lambda_4=-\mu,\quad \lambda_5=-\omega .
\label{eq:eigs}
\end{equation}
All but $\lambda_2=(\gamma+\mu)(\Rzero-1)$ are strictly negative for admissible parameters.
\end{proof}

The spectrum \eqref{eq:eigs} is real, so no oscillatory instability can arise at $E_0$: behavioral parameters enter only through $\lambda_1,\lambda_3,\lambda_5$, that is, they set how fast the behavioral subsystem relaxes but never whether the disease-free state is stable.

\begin{theorem}
\label{thm:global_dfe}
If $\Rzero<1$, $E_0$ is globally asymptotically stable on $\Gam$.
\end{theorem}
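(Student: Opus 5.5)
Since local stability is already given by \Cref{thm:local_dfe}, the plan is to prove global attractivity of $E_0$ on $\Gam$. The idea is to use a linear Lyapunov function on the infected compartments together with the bound $\Sa\le 1$, then LaSalle, then a cascade argument for the remaining variables.

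For the Lyapunov function, the natural choice, guided by the next-generation matrices in \eqref{eq:R0}, is the left Perron vector of $V^{-1}F$:
\[
L=\Ia+\frac{\dzt}{\dzt+\gamma+\mu}\Iu .
\]
This choice is delicate because $\delta_1,\delta_2$ depend on $A$. A simpler candidate is $L=I=\Ia+\Iu$. Summing \eqref{eq:3}--\eqref{eq:4}, the accessibility exchange terms cancel exactly, giving
\[
\frac{dI}{dt}=\beta\Sa\Ia-(\gamma+\mu)I\le \beta\Ia-(\gamma+\mu)(\Ia+\Iu)\le (\gamma+\mu)(\Rzero-1)\Ia-(\gamma+\mu)\Iu ,
\]
where we used $\Sa\le1$ on $\Gam$. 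When $\Rzero<1$ both terms are nonpositive, so $\dot I\le -(\gamma+\mu)\min\{1-\Rzero,1\}\, I\le 0$. This gives exponential decay $I(t)\le I(0)e^{-ct}$ directly, and LaSalle is not even needed. I will use $L=I$ and record the cancellation as the key observation; it is the same structural fact that makes $\Rzero$ behavior-independent.

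With $I(t)\to0$ exponentially, the remaining variables follow in cascade. From \eqref{eq:5}, $R\to0$, by the variation-of-constants formula with forcing $\gamma I\to0$. From \eqref{eq:6}, $\dot A\le \rho I-\omega A$, so $A\to0$ by a comparison lemma. From \eqref{eq:2}, $\dot\Su\le A\mu+\kzo A-\kzt(1-A)\Su-\mu\Su$. Once $A$ is small, the coefficient of $\Su$ is at most $-\mu$ and the forcing tends to zero, so $\Su\to0$. Finally, conservation gives $\Sa=1-\Su-I-R\to1$. Combining this attractivity with local asymptotic stability from \Cref{thm:local_dfe} yields global asymptotic stability on $\Gam$.

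The main obstacle I anticipate is making the cascade rigorous for the nonautonomous linear inequalities. The $\Su$ equation has a time-varying coefficient, so I would use the elementary lemma that if $\dot x\le -a x+f(t)$ with $a>0$ and $f\to0$, then $\limsup x\le 0$. Applying it requires waiting until $A(t)$ is small enough that $-\kzt(1-A)-\mu\le-\mu$; this holds for all $t$, since $A\le1$. Alternatively, the theory of asymptotically autonomous systems, with limiting system $\dot\Su=-(\kzt+\mu)\Su$, handles this cleanly. No step seems seriously hard once the exchange-term cancellation in $\dot I$ is noticed.
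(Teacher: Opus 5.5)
Your proposal is correct and follows essentially the same route as the paper. Both arguments sum \eqref{eq:3}--\eqref{eq:4} so the accessibility exchange terms cancel, use $\Sa\le1$ to get $\dot I\le-(\gamma+\mu)(1-\Rzero)I$ (your $\min\{1-\Rzero,1\}$ is just $1-\Rzero$), and then cascade $R\to0$, $A\to0$, $\Su\to0$, $\Sa\to1$. Your uniform bound $-\kzt(1-A)-\mu\le-\mu$ in the $\Su$ equation, and your explicit appeal to \Cref{thm:local_dfe} for the stability half of global asymptotic stability, make precise what the paper's proof leaves implicit.
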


\begin{proof}
Adding \eqref{eq:3} and \eqref{eq:4}, the accessibility exchange terms cancel exactly and
\begin{equation}
\frac{dI}{dt}=\beta \Sa\Ia-(\gamma+\mu)I\le(\gamma+\mu)(\Rzero-1)I,
\label{eq:Idot}
\end{equation}
since $\Sa\le1$ and $\Ia\le I$ on $\Gam$. The comparison lemma gives $I(t)\le I(0)e^{(\gamma+\mu)(\Rzero-1)t}\to0$. Then \eqref{eq:6} is a linear equation for $A$ with exponentially vanishing forcing and decay rate at least $\omega$, so $A(t)\to0$; \eqref{eq:5} gives $R(t)\to0$; and \eqref{eq:2} with $A\to0$ reduces asymptotically to $d\Su/dt=-(\kzt+\mu)\Su+o(1)$, so $\Su(t)\to0$ and $\Sa(t)\to1$.
\end{proof}

The exact cancellation in \eqref{eq:Idot} states that accessibility transitions relabel infectious individuals without changing their number. This is what makes the elimination threshold behavior-independent even globally, and it is the precise sense in which awareness restructures transmission rather than removing infection.

\subsection{Reduction of the endemic problem to a scalar equation}

At any equilibrium with $I^{*}>0$, setting the right-hand sides of \eqref{eq:1}--\eqref{eq:6} to zero gives
\begin{align}
A^{*}&=\frac{\rho I^{*}}{\rho I^{*}+\omega},\qquad
R^{*}=\frac{\gamma}{\mu}I^{*},\qquad
S^{*}=\Sa^{*}+\Su^{*}=1-\frac{\gamma+\mu}{\mu}I^{*},
\label{eq:AR}\\[2pt]
\Ia^{*}&=\frac{\delta_2(A^{*})+\gamma+\mu}{D(A^{*})}I^{*},\qquad
\Iu^{*}=\frac{\delta_1(A^{*})}{D(A^{*})}I^{*},\qquad D(A)=\delta_1(A)+\delta_2(A)+\gamma+\mu,
\label{eq:IaIu}\\[2pt]
\Sa^{*}&=\frac{(\kappa_2(A^{*})+\mu)S^{*}-\mu A^{*}}{\Delta(A^{*})},\qquad
\Su^{*}=\frac{\kappa_1(A^{*})S^{*}+\mu A^{*}}{\Delta(A^{*})},
\label{eq:SaSu}
\end{align}
with $\Delta(A)=\kappa_1(A)+\kappa_2(A)+\mu\ge\min(\kzo,\kzt)+\mu>0$. Note that $\Sa^{*}+\Su^{*}=S^{*}$ identically. Every coordinate is thus an explicit function of the single scalar $I^{*}$.

Adding \eqref{eq:1} and \eqref{eq:2} at equilibrium, all accessibility terms cancel and $\beta\Sa^{*}\Ia^{*}=\mu(1-S^{*})=(\gamma+\mu)I^{*}$. Substituting $\Ia^{*}$ from \eqref{eq:IaIu} and dividing by $I^{*}>0$ yields the \emph{endemic balance identity}
\begin{equation}
\beta\,\Sa^{*}=(\gamma+\mu)\,\Phi(A^{*}).
\label{eq:balance}
\end{equation}
Equation \eqref{eq:balance} is the statement that the effective reproduction number equals one at equilibrium, and it separates the two channels through which awareness acts. On the left, withdrawal reduces the accessible susceptible pool below the total susceptible pool. On the right, sequestration inflates the effective removal time of an infection by the factor $\Phi(A^{*})>1$. Setting $A^{*}=0$ gives $\Phi=1$ and recovers the classical condition $\beta S^{*}=\gamma+\mu$.

Define, for $I\in[0,I_{\max}]$ with $I_{\max}=\mu/(\gamma+\mu)$ the value at which $S^{*}=0$,
\begin{equation}
\mathcal{A}(I)=\frac{\rho I}{\rho I+\omega},\qquad
\Sfun(I)=\frac{\big(\kappa_2(\mathcal{A}(I))+\mu\big)\Big(1-\tfrac{\gamma+\mu}{\mu}I\Big)-\mu \mathcal{A}(I)}{\Delta(\mathcal{A}(I))},
\label{eq:Sfun}
\end{equation}
\begin{equation}
G(I)=\beta\,\Sfun(I)-(\gamma+\mu)\,\Phi(\mathcal{A}(I)).
\label{eq:G}
\end{equation}
Endemic equilibria in $\Gam$ correspond exactly to roots of $G$ at which $\Sfun\ge0$. Let $\hat I\in(0,I_{\max}]$ be the first zero of $\Sfun$, or $\hat I=I_{\max}$ if $\Sfun$ has no zero there.

\begin{lemma}[Strict monotonicity]
\label{lem:monotone}
$\Sfun$ is strictly decreasing on $[0,\hat I]$ and, for fixed $I$, strictly decreasing in $A$. Consequently $G$ is strictly decreasing on $[0,\hat I]$, and $G(I)<0$ for $I\in(\hat I,I_{\max}]$.
\end{lemma}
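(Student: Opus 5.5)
The plan is to view $\Sfun$ as a composition. Write $\sigma(A,s)=\big[(\kappa_2(A)+\mu)s-\mu A\big]/\Delta(A)$, so that $\Sfun(I)=\sigma\big(\mathcal{A}(I),s(I)\big)$ with $s(I)=1-\tfrac{\gamma+\mu}{\mu}I$. I will establish three monotonicity facts and then combine them by the chain rule:
\begin{enumerate}
\item $\mathcal{A}$ is strictly increasing, since $\mathcal{A}'(I)=\rho\omega/(\rho I+\omega)^2>0$;
\item $s$ is strictly decreasing;
\item $\sigma$ is strictly increasing in $s$ and strictly decreasing in $A$ on the relevant range.
\end{enumerate}

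Monotonicity in $s$ is immediate: $\partial_s\sigma=(\kappa_2(A)+\mu)/\Delta(A)>0$, because $\Delta(A)\ge\min(\kzo,\kzt)+\mu>0$ as noted after \eqref{eq:SaSu}. The key computation is $\partial_A\sigma$, using $\kappa_1(A)=\kzo A$ and $\kappa_2(A)=\kzt(1-A)$. The quotient rule gives
\[
\partial_A\sigma=\frac{(-\kzt s-\mu)\Delta(A)-\big[(\kzt(1-A)+\mu)s-\mu A\big](\kzo-\kzt)}{\Delta(A)^2}.
\]
I expect the $A$-dependence of the numerator to cancel completely. Collecting the $s$-terms and the $s$-free terms separately should leave
\[
\partial_A\sigma=-\frac{(\kzt+\mu)(\kzo s+\mu)}{\Delta(A)^2},
\]
which is strictly negative whenever $s\ge0$, i.e.\ for all $I\in[0,I_{\max}]$. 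This is the step I regard as the main obstacle. A priori the sign is unclear, because $\Delta$ may increase or decrease in $A$ depending on the sign of $\kzo-\kzt$. Carrying out the cancellation carefully is what removes the dependence on that sign. The strictness survives even in degenerate cases such as $\kzo=0$, thanks to the $\mu$ terms.

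With these facts, the chain rule gives
\[
\Sfun'(I)=\partial_A\sigma\,\mathcal{A}'(I)+\partial_s\sigma\,s'(I)<0
\]
on the whole of $[0,I_{\max}]$, and in particular on $[0,\hat I]$. The second claim of the lemma, that $\Sfun$ is strictly decreasing in $A$ at fixed $I$, is exactly $\partial_A\sigma<0$.

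For $G$, I next check that $\Phi$ from \eqref{eq:Phi} is nondecreasing in $A$. Its variable part is $\dzo A/\big(\dzt(1-A)+\gamma+\mu\big)$, whose numerator is nondecreasing and whose positive denominator is nonincreasing in $A$. Composing with the increasing map $\mathcal{A}$, the term $-(\gamma+\mu)\Phi(\mathcal{A}(I))$ is nonincreasing in $I$. Adding $\beta\Sfun(I)$, which is strictly decreasing, shows that $G$ is strictly decreasing on $[0,\hat I]$.

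Finally, for $I\in(\hat I,I_{\max}]$ (a nonempty range only when $\Sfun$ has a zero $\hat I<I_{\max}$), strict decrease of $\Sfun$ on all of $[0,I_{\max}]$ gives $\Sfun(I)<\Sfun(\hat I)=0$. Combined with $\Phi\ge1$, this yields $G(I)<-(\gamma+\mu)<0$.
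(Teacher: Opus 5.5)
Your proof is correct. Its skeleton matches the paper's: apply the quotient rule, then determine the sign of the coefficient of $\mathcal{A}'(I)$. The difference lies in how that crux is handled.

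The paper's quantity $B=(\kzt s+\mu)\Delta+N(\kzo-\kzt)$ is exactly minus the numerator of your $\partial_A\sigma$. The paper only bounds $B$ from below, splitting on the sign of $\kzo-\kzt$. In the case $\kzo\ge\kzt$ it uses $N\ge0$, which holds only on $[0,\hat I]$. In the case $\kzo<\kzt$ it uses $N\le(\kzt(1-a)+\mu)s$.

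Your cancellation is right: expanding gives $B=(\kzt+\mu)(\kzo s+\mu)$ identically in $A$. So you need no case split and no sign information on $N$, only $s\ge0$.

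This buys a stronger conclusion: $\Sfun$ is strictly decreasing on all of $[0,I_{\max}]$. That gives $\Sfun<0$ on $(\hat I,I_{\max}]$ directly. The paper asserts $\Sfun\le0$ there in its last line without saying why $\Sfun$ cannot turn positive again after its first zero. The claim is true, for instance because $N$ itself is decreasing in $I$, but the paper leaves it implicit.
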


\begin{proof}
Write $a=\mathcal{A}(I)\in[0,1)$, $s=1-\frac{\gamma+\mu}{\mu}I\in[0,1]$, $N=(\kzt(1-a)+\mu)s-\mu a$ and $\Delta=\kzt+\mu-(\kzt-\kzo)a$, so that $\Sfun=N/\Delta$ with $N\ge0$ on $[0,\hat I]$. Differentiating,
\begin{equation}
\Delta^{2}\frac{d\Sfun}{dI}=\big(\kzt(1-a)+\mu\big)\,s'(I)\,\Delta-\mathcal{A}'(I)\,B,
\qquad B:=(\kzt s+\mu)\Delta+N(\kzo-\kzt),
\label{eq:dS}
\end{equation}
with $s'(I)=-(\gamma+\mu)/\mu<0$ and $\mathcal{A}'(I)=\rho\omega/(\rho I+\omega)^{2}>0$. It suffices to show $B>0$. If $\kzo\ge\kzt$ then $B\ge(\kzt s+\mu)\Delta>0$ because $N\ge0$. If $\kzo<\kzt$, put $c=\kzt-\kzo>0$ and use $N\le(\kzt(1-a)+\mu)s$ to obtain
\begin{equation}
B\;\ge\;\mu\Delta+s\Big[\kzt\Delta-c\big(\kzt(1-a)+\mu\big)\Big]
\;=\;\mu\Delta+s\,\kzo(\kzt+\mu)\;>\;0,
\end{equation}
where the terms in $a$ cancel identically. Hence $d\Sfun/dI<0$. The same quantity $B$ appears in $\partial\Sfun/\partial a=-B/\Delta^{2}<0$. Since $\Phi$ is strictly increasing in $A$ whenever $\dzo>0$ and $\mathcal{A}$ is strictly increasing in $I$, the function $G$ is strictly decreasing. Finally, $\Sfun\le0$ on $(\hat I,I_{\max}]$ gives $G\le-(\gamma+\mu)\Phi<0$ there.
\end{proof}

\begin{theorem}[Existence, uniqueness, and absence of backward bifurcation]
\label{thm:endemic}
System \eqref{eq:1}--\eqref{eq:6} has no endemic equilibrium in $\Gam$ when $\Rzero\le1$, and exactly one, denoted $E^{*}$, when $\Rzero>1$. Its total prevalence satisfies $I^{*}\in(0,I_{\max})$ with $I_{\max}=\mu/(\gamma+\mu)$, and $I^{*}\to0$ as $\Rzero\downarrow1$.
\end{theorem}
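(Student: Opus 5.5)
The plan is to reduce everything to the scalar function $G$ of \eqref{eq:G} and apply \Cref{lem:monotone}. By the reduction preceding the lemma, endemic equilibria in $\Gam$ correspond bijectively to zeros $I^{*}\in(0,I_{\max}]$ of $G$ with $\Sfun(I^{*})\ge0$. All other coordinates are then determined explicitly by \eqref{eq:AR}--\eqref{eq:SaSu}, and each is nonnegative: $\Ia^{*},\Iu^{*}\ge0$ trivially, $\Su^{*}\ge0$ since $S^{*}\ge0$, and $A^{*}\in[0,1)$. So it suffices to count zeros of $G$ on $(0,\hat I]$. On $(\hat I,I_{\max}]$ the lemma gives $G<0$, so no admissible roots lie there.

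First I evaluate $G$ at the left endpoint. At $I=0$ we have $\mathcal{A}(0)=0$, $\kappa_2(0)=\kzt$, $\kappa_1(0)=0$, so $\Delta(0)=\kzt+\mu$ and $\Sfun(0)=1$. Also $\Phi(0)=1$ because $\delta_1(0)=0$. Hence
\[
G(0)=\beta-(\gamma+\mu)=(\gamma+\mu)(\Rzero-1).
\]
If $\Rzero\le1$, then $G(0)\le0$, and strict decrease of $G$ on $[0,\hat I]$ gives $G<0$ on $(0,\hat I]$. Together with $G<0$ beyond $\hat I$, this shows there are no endemic equilibria.

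If $\Rzero>1$, then $G(0)>0$. I would next check the sign at $\hat I$. If $\hat I<I_{\max}$ is a genuine zero of $\Sfun$, then $G(\hat I)=-(\gamma+\mu)\Phi<0$. If $\Sfun$ has no zero, then $\hat I=I_{\max}$, where $s=0$ and $\Sfun(I_{\max})=-\mu\mathcal{A}/\Delta<0$ because $\mathcal{A}(I_{\max})>0$; so in fact $\Sfun$ always has a first zero strictly before $I_{\max}$. Either way, $G(\hat I)<0$. Continuity and strict monotonicity of $G$ on $[0,\hat I]$ then give exactly one root $I^{*}\in(0,\hat I)\subset(0,I_{\max})$, and at this root $\Sfun(I^{*})>0$. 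This is the point requiring care: the lemma needs $\dzo>0$ for the strict increase of $\Phi$, but strict decrease of $\Sfun$ alone already makes $G$ strictly decreasing, so the argument does not depend on $\dzo$.

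For the limit $I^{*}\to0$ as $\Rzero\downarrow1$, I would vary $\beta\downarrow\gamma+\mu$ with the other parameters fixed, and write $G_\beta(I)=\beta\Sfun(I)-(\gamma+\mu)\Phi(\mathcal{A}(I))$. Fix any $\varepsilon\in(0,\hat I)$; note that $\hat I$ is independent of $\beta$. At $\beta=\gamma+\mu$ we have $G(\varepsilon)<G(0)=0$. Since $G_\beta(\varepsilon)$ is affine in $\beta$, it remains negative for $\beta$ close to $\gamma+\mu$. Monotonicity then forces $I^{*}<\varepsilon$. The main obstacle is just keeping track of the boundary cases at $\hat I$, and ensuring that admissibility ($\Sfun\ge0$) is equivalent to $\Sa^{*}\ge0$; both are handled above.
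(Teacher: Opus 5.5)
Your proposal is correct and follows the paper's proof. Like the paper, you compute $G(0)=(\gamma+\mu)(\Rzero-1)$, use the strict monotonicity of \Cref{lem:monotone} together with $G(\hat I)<0$ to get no root or exactly one, and read off admissibility from \eqref{eq:AR}--\eqref{eq:SaSu}; your additional points are all valid and tighten steps the paper states tersely, namely that $\Sfun(I_{\max})=-\mu\mathcal{A}(I_{\max})/\Delta<0$ forces $\hat I<I_{\max}$, that strict decrease of $G$ does not need $\dzo>0$, and the explicit $\varepsilon$-argument (with $\hat I$ independent of $\beta$) for $I^{*}\to0$.
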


\begin{proof}
$\Sfun(0)=1$ and $\Phi(\mathcal{A}(0))=1$, so $G(0)=\beta-(\gamma+\mu)=(\gamma+\mu)(\Rzero-1)$. If $\Rzero\le1$ then $G(0)\le0$ and, by \Cref{lem:monotone}, $G<0$ on $(0,I_{\max}]$, so no root exists. If $\Rzero>1$ then $G(0)>0$ while $G(\hat I)=-(\gamma+\mu)\Phi(\mathcal{A}(\hat I))<0$, and strict monotonicity gives a unique root $I^{*}\in(0,\hat I)$. At that root $\beta\Sfun(I^{*})=(\gamma+\mu)\Phi>0$, so $\Sa^{*}>0$; also $S^{*}\ge0$, whence $\Su^{*}\ge0$ by \eqref{eq:SaSu}, and $\Ia^{*},\Iu^{*},R^{*}\ge0$, $A^{*}\in(0,1)$. Thus $E^{*}\in\Gam$. Continuity of $G$ in $\beta$ together with $G(0)\to0$ as $\Rzero\downarrow1$ and $\partial G/\partial I<0$ gives $I^{*}\to0$.
\end{proof}

\begin{corollary}
\label{cor:forward}
The bifurcation at $\Rzero=1$ is forward and transcritical for all admissible parameter values. In particular there is no subthreshold endemic state and no bistability, and $E^{*}$ is locally asymptotically stable for $\Rzero$ in a right neighbourhood of $1$.
\end{corollary}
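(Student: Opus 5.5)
The corollary has two parts: a global structural claim and a local stability claim near $\Rzero=1$. I would treat them separately.

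\emph{Forward, transcritical, no bistability.} This part follows directly from \Cref{thm:endemic}. For $\Rzero\le1$ there is no endemic equilibrium, so no subthreshold endemic state exists. For $\Rzero>1$ the endemic equilibrium $E^{*}$ is unique, so two coexisting endemic states (and hence bistability between endemic states) are impossible. Moreover $I^{*}\to0$ as $\Rzero\downarrow1$, so the branch $E^{*}$ emanates from $E_0$ at $\Rzero=1$. To confirm the bifurcation is transcritical, I would note the following:
\begin{itemize}
\item $E_0$ exists for all parameters.
\item By \Cref{thm:local_dfe} and \eqref{eq:eigs}, $E_0$ exchanges stability exactly at $\Rzero=1$ through the simple real eigenvalue $\lambda_2=\beta-(\gamma+\mu)$.
\item $\lambda_2$ crosses zero with nonzero speed $\partial\lambda_2/\partial\beta=1$.
\end{itemize}
The branch is forward because $G(0)=(\gamma+\mu)(\Rzero-1)$ and $G'<0$ by \Cref{lem:monotone}. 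The implicit function theorem then gives $I^{*}$ as a $C^1$ function of $\beta$ near $\beta_0=\gamma+\mu$, with
\[
dI^{*}/d\beta=-\frac{\partial_\beta G}{\partial_I G}=-\frac{\Sfun(0)}{G'(0)}>0 .
\]
Hence the branch enters $I>0$ exactly when $\beta>\gamma+\mu$.

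\emph{Local stability near threshold.} I would use the Castillo-Chavez--Song center manifold theorem, with $\beta$ as bifurcation parameter at $\beta^{*}=\gamma+\mu$. The Jacobian at $E_0$ in the reduced coordinates $(\Su,\Ia,\Iu,R,A)$, with $\Sa=1-\Su-\Ia-\Iu-R$, has a simple zero eigenvalue and all other eigenvalues negative by \eqref{eq:eigs}. The left eigenvector $v$ is supported on the infected block; the block structure gives $v=(0,1,v_3,0,0)$ with $v_3=\dzt/(\dzt+\gamma+\mu)\cdot$(up to normalisation). The right eigenvector $w$ has $w_{\Ia}=1$, $w_{\Iu}=0$ (since $\delta_1(0)=0$), $w_A=\rho/\omega$, $w_R=\gamma/\mu$, and $w_{\Su}$ determined by the linearised $\Su$ equation.

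Only \eqref{eq:3} contains a nonlinearity that matters at leading order. That equation contains $\beta\Sa\Ia$ with $\Sa=1-\Su-\Ia-\Iu-R$, together with the terms $-\dzo A\Ia$ and $\dzt(1-A)\Iu$. The resulting coefficients are
\[
b=\sum v_k w_i\,\partial^2_{x_i\beta}f_k=v_{\Ia}w_{\Ia}=1>0,
\]
and
\[
a=2\,w_{\Ia}\big[-\beta(w_{\Su}+w_{\Ia}+w_{R})-\dzo w_A\big]+(\text{terms from }\Iu,\ \text{vanishing since } w_{\Iu}=0).
\]

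\emph{Main obstacle.} The main work is verifying $a<0$, which in turn requires checking the sign of $w_{\Su}$. From \eqref{eq:2} linearised, $w_{\Su}(\kzt+\mu)=\mu w_A$, so $w_{\Su}>0$. Then every term inside the bracket is negative, giving $a<0$. Together with $b>0$, this shows the bifurcation is forward and that $E^{*}$ is locally asymptotically stable for $\beta$ slightly above $\gamma+\mu$.

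As a cross-check, I would argue by continuity: the eigenvalues of the Jacobian at $E^{*}$ depend continuously on $\beta$ and converge to those of $E_0$. At the threshold, the only eigenvalue near zero is real and simple. Its sign on the branch is opposite to that of $\lambda_2$ on the unstable $E_0$ branch, by the standard exchange of stability at a transcritical bifurcation, so it is negative.
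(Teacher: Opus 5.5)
Your overall route is the paper's. Uniqueness and the absence of a subthreshold state come from \Cref{thm:endemic}, transversality comes from the implicit function theorem applied to $G$, and local stability of $E^{*}$ comes from the Castillo-Chavez--Song exchange-of-stability result with $b=1>0$. The problem is in the one place where you go beyond the paper, the explicit computation of $a$. It contains two errors, and your argument for $a<0$ does not cover the true coefficient.

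\emph{First error.} The linearised $\Su$ equation at $E_0$ is $-(\kzt+\mu)w_{\Su}+(\kzo+\mu)w_A=0$. This is because $\partial_A[\kappa_1(A)\Sa]=\kzo$ at $E_0$, in addition to the birth term $\mu$. Hence $w_{\Su}=\rho(\kzo+\mu)/\big(\omega(\kzt+\mu)\big)$; this slip does not affect the sign.

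\emph{Second error.} The claim that only \eqref{eq:3} matters is false. Your own $v_3=v_{\Iu}=\dzt/(\dzt+\gamma+\mu)$ is nonzero whenever $\dzt>0$. Equation \eqref{eq:4} contains the bilinear term $\dzo A\Ia$. Its contribution $2v_{\Iu}\dzo w_A w_{\Ia}$ involves $w_A$ and $w_{\Ia}$, not $w_{\Iu}$, so it does not vanish, and it is strictly positive. The correct coefficient is
\[
a=-2\beta\,(1+w_{\Su}+w_R)-2\dzo\,w_A\,(1-v_{\Iu}),\qquad 1-v_{\Iu}=\frac{\gamma+\mu}{\dzt+\gamma+\mu}.
\]
This is still negative, but only because $v_{\Iu}<1$: the positive term from the $\Iu$ equation is dominated by the $-2\dzo w_A$ term from the $\Ia$ equation. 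You must add that observation. Saying that every term inside your bracket is negative says nothing about the term you dropped.

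A useful check is that the corrected value satisfies $a=2G'(0)$ exactly. This follows from $\Sfun'(0)=-(1+w_R+w_{\Su})$ and $\frac{d}{dI}\Phi(\mathcal{A}(I))\big|_{I=0}=\dzo w_A/(\dzt+\gamma+\mu)$. Consequently the centre-manifold branch slope $-2b/a$ coincides with the implicit-function slope $dI^{*}/d\beta=1/|G'(0)|$. This identity is also why the paper never computes $a$. Once $b\neq0$ and the branch is known from $G$ to leave $E_0$ with finite positive slope, $a<0$ is forced, and exchange of stability gives local asymptotic stability of $E^{*}$. Your closing ``cross-check'' paragraph is essentially that argument. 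It becomes a complete proof once you recognise your $dI^{*}/d\beta\in(0,\infty)$ computation as the nondegeneracy condition it needs.
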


\begin{proof}
The endemic branch exists only for $\Rzero>1$ and emerges continuously from $E_0$ by \Cref{thm:endemic}, so the branch is forward. Implicit differentiation of $G(I^{*};\beta)=0$ gives $dI^{*}/d\beta=-\Sfun(I^{*})/G'(I^{*})>0$, and at $\beta=\gamma+\mu$ this limit is $1/|G'(0)|\in(0,\infty)$, so the crossing is transversal and nondegenerate. The zero eigenvalue of $J(E_0)$ at threshold is simple by \eqref{eq:eigs}. The exchange of stability for a transversal transcritical bifurcation with a simple zero eigenvalue \citep{CastilloChavez2004} then gives local asymptotic stability of $E^{*}$ near threshold.
\end{proof}

The epidemiological content of \Cref{cor:forward} is that reducing $\Rzero$ below one always eliminates infection in this model. behavior can make elimination easier or harder to approach, but it cannot create the subthreshold persistence that backward bifurcations produce in models where protective behavior is imperfect in a different sense.

\subsection{Uniform persistence of infection and of awareness}

\begin{theorem}
\label{thm:persistence}
Let $\Rzero>1$. There exists $\eta>0$, independent of initial conditions, such that every solution with $\Ia(0)+\Iu(0)>0$ satisfies
\begin{equation}
\liminf_{t\to\infty}I(t)\ \ge\ \eta,
\qquad
\liminf_{t\to\infty}A(t)\ \ge\ \frac{\rho\eta}{\rho\eta+\omega}\;>\;0 .
\end{equation}
\end{theorem}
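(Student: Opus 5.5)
The plan is to use the standard acyclicity/persistence machinery of Thieme and of Hofbauer--So, applied to the semiflow on the compact invariant set $\Gam$ given by \Cref{prop:wellposed}. The awareness bound then follows from the infection bound by a comparison argument.

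\textbf{Setup.} Let $X_0=\{x\in\Gam:\Ia+\Iu>0\}$ and $\partial X_0=\Gam\setminus X_0$. First check that $X_0$ is forward invariant. Adding \eqref{eq:3}--\eqref{eq:4} gives $dI/dt\ge-(\gamma+\mu)I$, so $I(t)\ge I(0)e^{-(\gamma+\mu)t}>0$. The system is also dissipative, since $\Gam$ is compact. On $\partial X_0$ we have $I\equiv0$, so \eqref{eq:6}, \eqref{eq:5} and \eqref{eq:2} force $A,R,\Su\to0$, exactly as in the proof of \Cref{thm:global_dfe}. Hence the maximal compact invariant set in $\partial X_0$ is $\{E_0\}$. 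This set is isolated and acyclic, so uniform persistence of $I$ reduces to showing that $W^s(E_0)\cap X_0=\emptyset$.

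\textbf{Key step: $E_0$ repels $X_0$.} This is the main obstacle, because $dI/dt=\beta\Sa\Ia-(\gamma+\mu)I$ involves $\Ia$ rather than $I$, and withdrawal into $\Iu$ could a priori mask growth. I would argue by contradiction. Suppose a solution in $X_0$ stays in a small $\epsilon$-neighbourhood of $E_0$ for all $t\ge T$. Then $A\le\epsilon$ and $\Sa\ge1-\epsilon$, so $(\Ia,\Iu)$ satisfies the linear differential inequality
\[
\frac{d}{dt}\begin{pmatrix}\Ia\\ \Iu\end{pmatrix}\ \ge\ M_\epsilon\begin{pmatrix}\Ia\\ \Iu\end{pmatrix},\qquad M_\epsilon=\begin{pmatrix}\beta(1-\epsilon)-\dzo\epsilon-(\gamma+\mu) & \dzt(1-\epsilon)\\ \dzo\cdot 0 & -\dzt-(\gamma+\mu)\end{pmatrix}.
\]
To keep $M_\epsilon$ Metzler with a nonnegative off-diagonal, bound $\delta_1(A)\Ia\ge0$ from below by $0$. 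At $\epsilon=0$, $M_0$ is triangular with top eigenvalue $\beta-(\gamma+\mu)>0$. If that entry is positive, the $\Ia$ equation alone already gives $\Ia'\ge\lambda_\epsilon\Ia$ with $\lambda_\epsilon>0$ for small $\epsilon$. We need $\Ia(T)>0$; if instead $\Ia(T)=0$ and $\Iu(T)>0$, then $\Ia$ becomes positive immediately because $\delta_2\Iu>0$. So $\Ia$ grows exponentially, contradicting smallness. Equivalently, one can take the Lyapunov-type function $L=\Ia$ (or a positive left eigenvector of $V^{-1}$ combined with $F$, as in \Cref{prop:R0}) and use the standard argument that a function increasing exponentially near $E_0$ precludes convergence.

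\textbf{Conclusion.} By Thieme's theorem (Hofbauer--So), there is an $\eta>0$ independent of initial data with $\liminf I\ge\eta$. For awareness, fix $\varepsilon>0$. For large $t$ we have $I\ge\eta-\varepsilon$, so $dA/dt\ge\rho(\eta-\varepsilon)(1-A)-\omega A$. The comparison lemma for this linear scalar equation gives
\[
\liminf_{t\to\infty}A\ \ge\ \frac{\rho(\eta-\varepsilon)}{\rho(\eta-\varepsilon)+\omega},
\]
and letting $\varepsilon\to0$ yields the stated bound.
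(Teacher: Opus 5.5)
Your proposal is correct and follows essentially the same route as the paper's proof. Both use the acyclicity criterion with $\{E_0\}$ as the only invariant set in the boundary, show by contradiction that $\Ia$ grows exponentially near $E_0$ via $d\Ia/dt\ge(\beta\Sa-\delta_1(A)-(\gamma+\mu))\Ia$ so that $W^{s}(E_0)$ misses the interior, and obtain the awareness bound by the same comparison argument. The only differences are cosmetic: you phrase the repulsion as a uniform weak-repeller condition (no orbit stays in an $\epsilon$-neighbourhood) rather than non-convergence to $E_0$, and your Metzler matrix $M_\epsilon$ is superfluous since only its first row is used.
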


\begin{proof}
$\Gam$ is compact and positively invariant, so the semiflow is point dissipative. The set $X_0=\{x\in\Gam:\Ia=\Iu=0\}$ is closed and invariant, and on $X_0$ we have $dA/dt=-\omega A$, $dR/dt=-\mu R$, hence $A,R\to0$, and then $\Su\to0$ and $\Sa\to1$; so $\{E_0\}$ is the maximal invariant set in $X_0$, and it is isolated and acyclic because it is hyperbolic for $\Rzero\ne1$. It remains to show $W^{s}(E_0)\cap(\Gam\setminus X_0)=\emptyset$. If $\Ia(0)+\Iu(0)>0$ then $\Ia(t)>0$ for all $t>0$, since $\Ia=0$ with $\Iu>0$ implies $d\Ia/dt=\delta_2(A)\Iu>0$ once $A<1$, which holds immediately because $dA/dt=-\omega<0$ at $A=1$. Suppose such a solution converged to $E_0$. Then $\Sa\to1$ and $A\to0$, so for $\varepsilon\in(0,\beta-(\gamma+\mu))$ there is $T$ with $\beta\Sa(t)-\dzo A(t)-(\gamma+\mu)\ge\beta-(\gamma+\mu)-\varepsilon>0$ for $t\ge T$. Since $d\Ia/dt\ge(\beta\Sa-\delta_1(A)-(\gamma+\mu))\Ia$, we get $\Ia(t)\ge \Ia(T)e^{(\beta-\gamma-\mu-\varepsilon)(t-T)}\to\infty$, contradicting $\Ia\to0$. Uniform persistence of $I$ now follows from the standard acyclicity criterion \citep[Thm.~4.6]{Thieme1993}; see also \citet[Ch.~8]{SmithThieme2011}. Given $\liminf I\ge\eta$, for any $\varepsilon>0$ and large $t$, \eqref{eq:6} gives $dA/dt\ge\rho(\eta-\varepsilon)(1-A)-\omega A$, whose unique equilibrium $\rho(\eta-\varepsilon)/(\rho(\eta-\varepsilon)+\omega)$ is globally attracting from below; letting $\varepsilon\to0$ gives the stated bound.
\end{proof}

\Cref{thm:persistence} makes the coexistence in the title a theorem rather than an observation, and it does so without reference to the identity of the attractor. Infection cannot be driven out by behavior, and awareness cannot be sustained without infection: the two persist or vanish together. The mechanism that suppresses transmission is itself powered by the transmission it suppresses, so protective behavior in this framework is intrinsically self-limiting.

\subsection{How much burden does awareness remove?}

Let $I^{*}_{\mathrm{SIR}}=\frac{\mu}{\gamma+\mu}\big(1-\Rzero^{-1}\big)$ denote the endemic prevalence of the SIR model with the same $\beta,\gamma,\mu$, obtained by switching all behavioral parameters off.

\begin{theorem}[Exact decomposition of burden reduction]
\label{thm:decomp}
Let $\Rzero>1$. Then
\begin{equation}
I^{*}_{\mathrm{SIR}}-I^{*}
=\frac{\mu}{\gamma+\mu}\left[\underbrace{\Su^{*}}_{\text{susceptible withdrawal}}
+\underbrace{\frac{\Phi(A^{*})-1}{\Rzero}}_{\text{infectious sequestration}}\right],
\label{eq:decomp}
\end{equation}
where $\dfrac{\Phi(A^{*})-1}{\Rzero}=\dfrac{\gamma+\mu}{\beta}\cdot\dfrac{\dzo A^{*}}{\dzt(1-A^{*})+\gamma+\mu}$. Both terms are nonnegative, and both vanish if and only if $A^{*}=0$. In particular $I^{*}<I^{*}_{\mathrm{SIR}}$ whenever awareness is active.
\end{theorem}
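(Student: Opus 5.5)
The decomposition follows by combining two equilibrium relations from the scalar reduction: the conservation identity for $S^{*}$ in \eqref{eq:AR} and the endemic balance identity \eqref{eq:balance}. First, \eqref{eq:AR} gives $I^{*}=\frac{\mu}{\gamma+\mu}(1-S^{*})$. Comparing with $I^{*}_{\mathrm{SIR}}=\frac{\mu}{\gamma+\mu}(1-\Rzero^{-1})$ yields
\begin{equation*}
I^{*}_{\mathrm{SIR}}-I^{*}=\frac{\mu}{\gamma+\mu}\bigl(S^{*}-\Rzero^{-1}\bigr).
\end{equation*}
Next, split $S^{*}=\Sa^{*}+\Su^{*}$. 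Dividing \eqref{eq:balance} by $\beta$ gives $\Sa^{*}=(\gamma+\mu)\Phi(A^{*})/\beta=\Phi(A^{*})/\Rzero$. Hence
\begin{equation*}
S^{*}-\Rzero^{-1}=\Su^{*}+\frac{\Phi(A^{*})-1}{\Rzero},
\end{equation*}
which is \eqref{eq:decomp}. The explicit form of the sequestration term comes from substituting the definition \eqref{eq:Phi} with $\delta_1(A)=\dzo A$ and $\delta_2(A)=\dzt(1-A)$, and writing $1/\Rzero=(\gamma+\mu)/\beta$.

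For the sign claims, I would use the facts from \Cref{thm:endemic} that $E^{*}\in\Gam$ and $A^{*}\in(0,1)$. The sequestration term $\frac{\gamma+\mu}{\beta}\cdot\frac{\dzo A^{*}}{\dzt(1-A^{*})+\gamma+\mu}$ is nonnegative because its denominator is positive. It vanishes exactly when $\dzo A^{*}=0$. For susceptible withdrawal, \eqref{eq:SaSu} gives $\Su^{*}=(\kzo A^{*}S^{*}+\mu A^{*})/\Delta(A^{*})$. Since $S^{*}\ge0$, this is nonnegative. Moreover $\Su^{*}\ge\mu A^{*}/\Delta(A^{*})$, so it is strictly positive when $A^{*}>0$ and zero when $A^{*}=0$.

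The main subtlety is the ``if and only if $A^{*}=0$'' clause. Under \Cref{thm:endemic} one always has $A^{*}>0$ whenever $\Rzero>1$ and $\rho>0$. The clause should therefore be read as: the bracket vanishes precisely in the degenerate situation $A^{*}=0$, which corresponds to switching awareness off.

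Strictly, the claim ``both terms vanish iff $A^{*}=0$'' needs a qualification for the sequestration term. That term also vanishes if $\dzo=0$, whereas $\Su^{*}$ vanishes only if $A^{*}=0$, because the birth term $\mu A^{*}$ is always present. I would state this explicitly. The conclusion $I^{*}<I^{*}_{\mathrm{SIR}}$ whenever $A^{*}>0$ then rests on the strict positivity of $\Su^{*}$ alone, and so does not depend on any assumption about the $\delta$ rates.
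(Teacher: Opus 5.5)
Your proof is correct and follows the paper's argument essentially line for line: the conservation relation in \eqref{eq:AR} gives $I^{*}_{\mathrm{SIR}}-I^{*}=\frac{\mu}{\gamma+\mu}(S^{*}-\Rzero^{-1})$, and then \eqref{eq:balance} together with $S^{*}=\Sa^{*}+\Su^{*}$ splits this difference into the two terms. Your caveat is also right: the sequestration term vanishes on its own whenever $\dzo=0$, so the ``iff $A^{*}=0$'' clause holds for the two terms jointly (via $\Su^{*}\ge\mu A^{*}/\Delta(A^{*})$), and deriving $I^{*}<I^{*}_{\mathrm{SIR}}$ from the strict positivity of $\Su^{*}$ is more careful than the paper's wording.
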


\begin{proof}
From \eqref{eq:AR}, $I^{*}=\frac{\mu}{\gamma+\mu}(1-S^{*})$ and $I^{*}_{\mathrm{SIR}}=\frac{\mu}{\gamma+\mu}(1-\Rzero^{-1})$, so $I^{*}_{\mathrm{SIR}}-I^{*}=\frac{\mu}{\gamma+\mu}(S^{*}-\Rzero^{-1})$. The balance identity \eqref{eq:balance} gives $\Sa^{*}=\Phi(A^{*})/\Rzero$, and $S^{*}=\Sa^{*}+\Su^{*}$, hence $S^{*}-\Rzero^{-1}=(\Phi(A^{*})-1)/\Rzero+\Su^{*}$. Since $\Su^{*}=(\kzo A^{*}S^{*}+\mu A^{*})/\Delta(A^{*})$ and $\Phi(A^{*})-1=\dzo A^{*}/(\dzt(1-A^{*})+\gamma+\mu)$, both vanish precisely when $A^{*}=0$.
\end{proof}

Equation \eqref{eq:decomp} is an exact accounting of what awareness buys, and it is additive: the two behavioral channels do not interact at equilibrium. The first channel is the standing stock of withdrawn susceptibles, which reduces the pool that can be infected. The second is the fraction of infectious time spent outside the transmitting pool, which reduces the pool that can infect. The decomposition also explains why the two channels respond oppositely to withdrawal and return rates, as the next result confirms.

\begin{theorem}[Comparative statics of the endemic state]
\label{thm:monotone}
Let $\Rzero>1$ and let all parameters be admissible with $A^{*}>0$. Then
\begin{equation}
\frac{\partial I^{*}}{\partial\beta}>0,\quad
\frac{\partial I^{*}}{\partial\kzo}<0,\quad
\frac{\partial I^{*}}{\partial\kzt}>0,\quad
\frac{\partial I^{*}}{\partial\dzo}<0,\quad
\frac{\partial I^{*}}{\partial\dzt}>0,\quad
\frac{\partial I^{*}}{\partial\rho}<0,\quad
\frac{\partial I^{*}}{\partial\omega}>0 .
\end{equation}
\end{theorem}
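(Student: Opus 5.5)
The plan is implicit differentiation of the scalar equilibrium condition $G(I^{*};p)=0$ from \eqref{eq:G}, where $p$ is any one of the parameters listed. By \Cref{thm:endemic}, $I^{*}\in(0,\hat I)$ is a simple root. By \Cref{lem:monotone}, $G'(I^{*})=\partial G/\partial I<0$. The implicit function theorem then gives
\begin{equation*}
\frac{\partial I^{*}}{\partial p}=-\frac{\partial G/\partial p}{\partial G/\partial I}\Big|_{I=I^{*}},
\end{equation*}
so the sign of $\partial I^{*}/\partial p$ equals the sign of $\partial G/\partial p$ at fixed $I=I^{*}$. The whole theorem therefore reduces to signing seven partial derivatives of $G=\beta\Sfun-(\gamma+\mu)\Phi(\mathcal{A}(I))$ holding $I$ fixed.

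\textbf{The easy cases.} For $\beta$, we have $\partial G/\partial\beta=\Sfun(I^{*})>0$, as already used in \Cref{cor:forward}. For $\dzo$ and $\dzt$, only $\Phi$ depends on these parameters. Since $\Phi(a)=1+\dzo a/(\dzt(1-a)+\gamma+\mu)$ with $a=\mathcal{A}(I^{*})\in(0,1)$, $\Phi$ is strictly increasing in $\dzo$ and strictly decreasing in $\dzt$. This gives $\partial G/\partial\dzo<0$ and $\partial G/\partial\dzt>0$.

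\textbf{The awareness parameters.} For $\rho$ and $\omega$, these enter only through $a=\mathcal{A}(I)=\rho I/(\rho I+\omega)$, which at fixed $I>0$ increases in $\rho$ and decreases in $\omega$. \Cref{lem:monotone} shows that both $\beta\Sfun$ and $-(\gamma+\mu)\Phi$ are decreasing in $a$: the first via $\partial\Sfun/\partial a=-B/\Delta^2<0$, the second with strictness when $\dzo>0$. Hence $\partial G/\partial\rho<0$ and $\partial G/\partial\omega>0$. A useful consistency check is that $a$ depends only on $\theta=\omega/\rho$.

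\textbf{The susceptible accessibility rates.} The remaining step is $\kzo,\kzt$, where only $\Sfun=N/\Delta$ depends on the parameter. Here
\begin{equation*}
N=(\kzt(1-a)+\mu)s-\mu a,\qquad \Delta=\kzo a+\kzt(1-a)+\mu .
\end{equation*}
For $\kzo$, $N$ does not depend on it and $\partial\Delta/\partial\kzo=a>0$. Since $N=\Sfun\Delta>0$ at $I^{*}$, it follows that $\partial\Sfun/\partial\kzo=-Na/\Delta^{2}<0$.

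The case $\kzt$ is the expected obstacle, because $\kzt$ appears in both numerator and denominator. Compute
\begin{equation*}
\Delta^{2}\,\frac{\partial\Sfun}{\partial\kzt}=(1-a)\,(s\Delta-N).
\end{equation*}
Expanding,
\begin{equation*}
s\Delta-N=s\kzo a+s\mu-\mu s+\mu a=a(\kzo s+\mu)>0 ,
\end{equation*}
so $\partial G/\partial\kzt>0$.

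Strictness throughout uses $A^{*}=a>0$ (guaranteed since $I^{*}>0$), $a<1$, and positivity of the baseline rates. Combining each sign with the implicit-function formula yields the seven stated inequalities.
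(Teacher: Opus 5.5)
Your proposal is correct and follows essentially the same route as the paper: implicit differentiation of $G(I^{*};p)=0$ using $\partial G/\partial I<0$ from \Cref{lem:monotone}, followed by signing each $\partial G/\partial p$ at fixed $I$, including the same identity $\Delta^{2}\,\partial\Sfun/\partial\kzt=(1-a)(s\Delta-N)$. Your explicit value $s\Delta-N=a(\kzo s+\mu)$ is exactly $\Delta\,\Su^{*}$ as the paper writes it, and strictness of the $\dzt$ sign needs $\dzo>0$, which your appeal to positive baseline rates covers.
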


\begin{proof}
By \Cref{lem:monotone}, $\partial G/\partial I<0$ at $I^{*}$, so the implicit function theorem gives $\operatorname{sign}(\partial I^{*}/\partial p)=\operatorname{sign}(\partial G/\partial p)$ for each parameter $p$, with $I$ held fixed. Then: $\partial G/\partial\beta=\Sfun(I^{*})>0$; $\partial\Sfun/\partial\kzo=-N\mathcal A/\Delta^{2}<0$; $\partial\Sfun/\partial\kzt=(1-\mathcal A)(s\Delta-N)/\Delta^{2}=(1-\mathcal A)\Su^{*}/\Delta>0$; $\partial\Phi/\partial\dzo=A^{*}/(\delta_2(A^{*})+\gamma+\mu)>0$ so $\partial G/\partial\dzo<0$; $\partial\Phi/\partial\dzt=-\delta_1(A^{*})(1-A^{*})/(\delta_2(A^{*})+\gamma+\mu)^{2}<0$ so $\partial G/\partial\dzt>0$. Finally $\partial G/\partial A=\beta\,\partial\Sfun/\partial A-(\gamma+\mu)\Phi'(A)<0$ by \Cref{lem:monotone}, while $\partial\mathcal A/\partial\rho>0$ and $\partial\mathcal A/\partial\omega<0$, which gives the last two signs.
\end{proof}

\Cref{thm:monotone} states that the sign of every behavioral effect on endemic prevalence is fixed, independently of parameter values: nothing in this model can reverse the direction of a behavioral intervention. Faster withdrawal always helps and faster return always hurts, on both the susceptible and the infectious side. This is a stronger and more useful statement than a numerical sensitivity ranking, because it holds over the whole admissible parameter space, and it means that the numerical results of \Cref{sec:results} are representative rather than parameter-specific.

\subsection{A single behavioral control parameter}

\begin{theorem}[Reduction to the fatigue-to-response ratio]
\label{thm:theta}
Every coordinate of $E^{*}$ depends on $\rho$ and $\omega$ only through $\theta=\omega/\rho$. Moreover $A^{*}$ is strictly decreasing and $I^{*}$ strictly increasing in $\theta$.
\end{theorem}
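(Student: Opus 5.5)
The plan is to write every quantity through the function $\mathcal{A}(I)$ and note that $\rho$ and $\omega$ enter nowhere else. Dividing numerator and denominator by $\rho$ gives
\begin{equation*}
\mathcal{A}(I)=\frac{\rho I}{\rho I+\omega}=\frac{I}{I+\theta},
\end{equation*}
so $\mathcal{A}$ depends on $(\rho,\omega)$ only through $\theta$. Inspecting \eqref{eq:Sfun} and \eqref{eq:G}, the functions $\Sfun$, $\Phi\circ\mathcal{A}$ and $G$ contain $\rho,\omega$ only via $\mathcal{A}$. They therefore depend on $(\rho,\omega)$ only through $\theta$, and so does the first zero $\hat I$ of $\Sfun$. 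By \Cref{thm:endemic}, $I^{*}$ is the unique root of $G$ in $(0,\hat I)$, so $I^{*}=I^{*}(\theta)$. Substituting into \eqref{eq:AR}--\eqref{eq:SaSu}, we get $A^{*}=I^{*}/(I^{*}+\theta)$. The other coordinates $R^{*},S^{*},\Ia^{*},\Iu^{*},\Sa^{*},\Su^{*}$ are explicit functions of $I^{*}$ and $A^{*}$ alone, with no further appearance of $\rho$ or $\omega$. This proves the first claim.

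For monotonicity in $\theta$ I would repeat the implicit-function argument of \Cref{thm:monotone}, now with $\theta$ as the parameter. Write $G(I;\theta)=g(I,\mathcal{A}(I;\theta))$, where
\begin{equation*}
g(I,a)=\beta\,\Sfun(I,a)-(\gamma+\mu)\Phi(a).
\end{equation*}
The proof of \Cref{lem:monotone} shows $\partial\Sfun/\partial a=-B/\Delta^{2}<0$, and $\Phi'(a)\ge0$, so $\partial g/\partial a<0$. Since $\partial\mathcal{A}/\partial\theta=-I/(I+\theta)^{2}<0$ for $I>0$, this gives $\partial G/\partial\theta>0$ at $I^{*}$. \Cref{lem:monotone} gives $\partial G/\partial I<0$ at $I^{*}$. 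Hence $dI^{*}/d\theta=-(\partial G/\partial\theta)/(\partial G/\partial I)>0$, so $I^{*}$ is strictly increasing in $\theta$. This matches the signs found for $\rho$ and $\omega$ in \Cref{thm:monotone}.

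The step I expect to be the main obstacle is the strict decrease of $A^{*}$. Here two effects compete: $A^{*}=I^{*}/(I^{*}+\theta)$ increases with $I^{*}$, and $I^{*}$ increases with $\theta$. I would avoid differentiating this formula directly and instead argue through the balance identity. Define the equilibrium curve in the $(I,a)$ plane by $g(I,a)=0$. Since $\partial g/\partial I<0$ (strict decrease of $\Sfun$ in $I$ at fixed $a$, from $s'(I)<0$ in \eqref{eq:dS}) and $\partial g/\partial a<0$, the curve is the graph of a strictly decreasing function $a=h(I)$. The equilibrium is then the intersection of this decreasing curve with the increasing curve $a=I/(I+\theta)$. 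Raising $\theta$ lowers the increasing curve pointwise, so the intersection slides down and to the right along $a=h(I)$. Thus $I^{*}$ increases and $A^{*}=h(I^{*})$ strictly decreases.

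To make this rigorous I need $\partial_I\Sfun(I,a)<0$ at fixed $a$, which holds because the numerator $N$ is linear in $s$ with positive coefficient $\kzt(1-a)+\mu$. I also need the intersection to remain in the region $\Sfun>0$, which is guaranteed by \Cref{thm:endemic}. The equivalent derivative check is $dA^{*}/d\theta=h'(I^{*})\,dI^{*}/d\theta<0$.
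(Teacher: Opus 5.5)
Your proof is correct, but it reaches the monotonicity claims by a different route from the paper.

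The paper switches the unknown from $I$ to $A$ via $I=\theta A/(1-A)$. In $H(A;\theta)=\beta\widetilde{\Sfun}(A;\theta)-(\gamma+\mu)\Phi(A)$, the ratio $\theta$ then enters only through $s$. So $\partial H/\partial\theta<0$ at fixed $A$, and the decrease of $A^{*}$ follows in one line. The increase of $I^{*}$ is then imported from the sign of $\partial I^{*}/\partial\omega$ in \Cref{thm:monotone}.

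You keep $I$ as the unknown, so $I^{*}$ is the easy half (implicit differentiation in $\theta$) and $A^{*}$ is the hard half. You settle it by observing that the equilibrium set $\{g(I,a)=0\}$ contains no $\rho$ or $\omega$ and is the graph of a strictly decreasing $h$. The parameter $\theta$ only selects where this graph meets the increasing curve $a=I/(I+\theta)$.

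This picture buys more than the paper's argument. It delivers both monotonicities at once, so your implicit-function step is redundant. It shows exactly how the competing effects you identified resolve. It also gives $A^{*}\to h(0)=A_{\infty}$ as $\theta\to0$, anticipating \Cref{thm:lowfatigue}. The paper's change of variable, for its part, is chosen so that it can be reused directly by setting $\theta=0$ in $H$.

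One detail to tighten. Your curve argument needs $\partial_a g<0$ at points $(I,a)$ off the composite curve $a=\mathcal{A}(I)$. The proof of \Cref{lem:monotone} only certifies $B>0$ under $N\ge0$ in the case $\kzo\ge\kzt$. This costs nothing: expanding gives the identity $B=(\kzt+\mu)(\kzo s+\mu)$, which is positive for every $a\in[0,1]$ and $s\ge0$. Alternatively, compare two equilibria directly. For $\theta_1<\theta_2$ you have $I_1<I_2$, and $a_2\ge a_1$ would force $0=g(I_2,a_2)\le g(I_2,a_1)<g(I_1,a_1)=0$, a contradiction.
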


\begin{proof}
Equation \eqref{eq:AR} is equivalent to $I^{*}=\theta A^{*}/(1-A^{*})$. Substituting this into \eqref{eq:Sfun}--\eqref{eq:G} and using $A$ as the unknown gives $H(A;\theta)=\beta\,\widetilde{\Sfun}(A;\theta)-(\gamma+\mu)\Phi(A)$, in which $\rho$ and $\omega$ appear only through $\theta$. Since $I$ is increasing in $A$ for fixed $\theta$, \Cref{lem:monotone} implies $H$ is strictly decreasing in $A$, so its root $A^{*}(\theta)$ is unique; the remaining coordinates follow from \eqref{eq:AR}--\eqref{eq:SaSu}. Increasing $\theta$ at fixed $A$ decreases $s$ and hence $\widetilde{\Sfun}$, so $\partial H/\partial\theta<0$ and $A^{*}$ decreases in $\theta$; monotonicity of $I^{*}$ follows from \Cref{thm:monotone}.
\end{proof}

behavioral policy therefore has a single equilibrium lever. Doubling the speed at which a population perceives risk is equivalent, for the long-run endemic state, to halving the rate at which it becomes fatigued. This collapse holds only asymptotically: the transient trajectory does depend on $\rho$ and $\omega$ separately, since $\rho$ sets the rate at which awareness tracks prevalence, and so does the stability of $E^{*}$ (\Cref{sec:hopf}).

\begin{theorem}[Low-fatigue limit]
\label{thm:lowfatigue}
Let $\Rzero>1$ and let $A_{\infty}\in(0,1)$ be the unique root of
\begin{equation}
\beta\,\frac{\kappa_2(A)+\mu-\mu A}{\Delta(A)}=(\gamma+\mu)\,\Phi(A).
\label{eq:Ainf}
\end{equation}
Then, as $\theta\to0^{+}$,
\begin{equation}
A^{*}(\theta)\to A_{\infty},
\qquad
I^{*}(\theta)=\theta\,\frac{A_{\infty}}{1-A_{\infty}}+o(\theta).
\end{equation}
\end{theorem}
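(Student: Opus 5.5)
We would work in the awareness parametrisation from the proof of \Cref{thm:theta}. Using $I=\theta A/(1-A)$, the endemic condition becomes $H(A;\theta)=0$, where
\begin{equation*}
H(A;\theta)=\beta\,\frac{(\kappa_2(A)+\mu)\bigl(1-\tfrac{\gamma+\mu}{\mu}\tfrac{\theta A}{1-A}\bigr)-\mu A}{\Delta(A)}-(\gamma+\mu)\Phi(A).
\end{equation*}
Setting $\theta=0$ gives exactly $H(A;0)=\beta(\kappa_2(A)+\mu-\mu A)/\Delta(A)-(\gamma+\mu)\Phi(A)$. Its zero set is therefore the root set of \eqref{eq:Ainf}.

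\textbf{Step 1: $A_\infty$ is well defined.} The function $H(\cdot;0)$ is strictly decreasing on $[0,1)$. This follows from the same computation as in \Cref{lem:monotone}: with $s\equiv1$ held fixed, $\partial\Sfun/\partial a=-B/\Delta^2<0$ with $B>0$, and $\Phi$ is nondecreasing. At the endpoints, $H(0;0)=\beta-(\gamma+\mu)>0$. At $A=1$, $\kappa_2(1)=0$, so the first term vanishes, and $H(1;0)=-(\gamma+\mu)\Phi(1)<0$. Continuity then gives a unique root $A_\infty\in(0,1)$.

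\textbf{Step 2: $A^*(\theta)\to A_\infty$.} The plan is an implicit-function argument at $(A_\infty,0)$. $H$ is smooth near that point because $1-A_\infty>0$. We have $\partial_AH(A_\infty;0)<0$ by Step 1. Note that $\partial_AH$ could vanish if the monotonicity were only non-strict, so we need the strict inequality here. Strict negativity follows from $B>0$ (indeed $B\ge\mu\Delta>0$ in both cases of \Cref{lem:monotone}), which gives $\partial_A\widetilde{\Sfun}<0$ at $\theta=0$.

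The implicit function theorem then yields a $C^1$ branch $A(\theta)$ with $A(0)=A_\infty$. For small $\theta>0$ this branch lies in $(0,1)$. Since $I=\theta A/(1-A)$ is small, the branch corresponds to an admissible root with $\Sfun>0$. By the uniqueness in \Cref{thm:endemic} and \Cref{thm:theta}, this branch coincides with $A^*(\theta)$.

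As an alternative that avoids smoothness at the boundary, we could use monotonicity in $\theta$. By \Cref{thm:theta}, $A^*(\theta)$ is decreasing in $\theta$, so its limit $\bar A\le A_\infty$ exists as $\theta\to0^+$. Passing to the limit in $H(A^*(\theta);\theta)=0$ gives $H(\bar A;0)=0$, provided $\bar A<1$. That condition holds because $H(A;\theta)\le H(A;0)$, so every root satisfies $A^*(\theta)\le A_\infty<1$.

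\textbf{Step 3: the expansion of $I^*$.} We have $I^*(\theta)=\theta A^*(\theta)/(1-A^*(\theta))$. Since $A^*\to A_\infty<1$, the factor $A^*/(1-A^*)\to A_\infty/(1-A_\infty)$. Hence $I^*=\theta A_\infty/(1-A_\infty)+\theta\cdot o(1)$, which is the stated expansion.

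The only delicate point is Step 2: establishing that $A^*$ stays bounded away from $1$, together with the strict sign of $\partial_AH$. Both are covered by the bound $A^*(\theta)\le A_\infty$ and the positivity of $B$. The rest is bookkeeping. The IFT route additionally gives $A^*(\theta)=A_\infty+O(\theta)$, which is stronger than required.
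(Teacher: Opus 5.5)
Your proof is correct and follows the same route as the paper. Both arguments reduce to $H(A;\theta)$, obtain a unique simple root $A_\infty$ of the strictly decreasing $H(\cdot;0)$ from $H(0;0)=\beta-(\gamma+\mu)>0$ and $H(1;0)=-(\gamma+\mu)\Phi(1)<0$, pass to $\theta\to0^+$, and substitute into $I^*=\theta A^*/(1-A^*)$; where the paper compresses the limit step into ``uniform convergence on compact subintervals of $[0,1)$ plus continuity of simple roots,'' your implicit-function argument (via $B\ge\mu\Delta>0$) and your comparison $H(A;\theta)\le H(A;0)$, giving $A^*(\theta)\le A_\infty<1$, make explicit why the root stays away from the singularity at $A=1$, and the IFT version additionally yields $A^*(\theta)=A_\infty+O(\theta)$.
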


\begin{proof}
Setting $\theta=0$ in $H$ fixes $s=1$ and yields \eqref{eq:Ainf}, whose left side at $A=0$ equals $\beta$ and at $A=1$ equals $0$, while the right side is at least $\gamma+\mu<\beta$ at $A=0$; strict monotonicity of $H(\cdot;0)$ gives a unique root $A_{\infty}\in(0,1)$. On any compact subinterval of $[0,1)$ containing $A_{\infty}$, $H(\cdot;\theta)\to H(\cdot;0)$ uniformly, so $A^{*}(\theta)\to A_{\infty}$ by continuity of simple roots of a strictly monotone family. Substituting into $I^{*}=\theta A^{*}/(1-A^{*})$ gives the expansion.
\end{proof}

\Cref{thm:lowfatigue} is the counterpart of threshold neutrality, and the two together define the reach of behavior in this framework. Awareness cannot lower the invasion threshold by any amount, yet as fatigue becomes negligible relative to responsiveness it can drive the endemic burden to zero linearly in $\theta$, while awareness itself saturates at a strictly positive plateau $A_{\infty}$ that is set by transmission and accessibility parameters and is independent of $\rho$ and $\omega$. behavioral fatigue is thus the sole obstruction to asymptotic elimination through behavior, which places the erosion of protective behavior, rather than its initial strength, at the centre of long-run epidemic burden. Elimination is approached but never attained: $I^{*}>0$ for every $\theta>0$, in agreement with \Cref{thm:persistence}.

\subsection{Accessibility is not transmission modulation}
\label{sec:contrast}

We now make precise the relationship between the present framework and the standard formulation in which awareness multiplies the transmission rate.

\begin{proposition}[Fast-relaxation limit]
\label{prop:fastlimit}
Replace $(\kappa_1,\kappa_2,\delta_1,\delta_2)$ by $(\kappa_1,\kappa_2,\delta_1,\delta_2)/\varepsilon$ and let $\varepsilon\to0^{+}$. The accessibility variables are then fast and, at fixed $S$ and $I$, relax to
\begin{equation}
\frac{\Sa}{S}\to g_S(A)=\frac{\kzt(1-A)}{\kzo A+\kzt(1-A)},\qquad
\frac{\Ia}{I}\to g_I(A)=\frac{\dzt(1-A)}{\dzo A+\dzt(1-A)},
\end{equation}
with fast eigenvalues $-(\kappa_1+\kappa_2)/\varepsilon$ and $-(\delta_1+\delta_2)/\varepsilon$, both negative and bounded away from zero. Tikhonov's theorem \citep{Verhulst2005} then gives, on finite time intervals, the reduced system
\begin{equation}
\frac{dS}{dt}=\mu-b(A)SI-\mu S,\qquad
\frac{dI}{dt}=b(A)SI-(\gamma+\mu)I,\qquad
\frac{dA}{dt}=\rho I(1-A)-\omega A,
\label{eq:reduced}
\end{equation}
with $b(A)=\beta\,g_S(A)\,g_I(A)$ strictly decreasing on $[0,1]$.
\end{proposition}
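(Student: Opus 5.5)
The plan is to rewrite \eqref{eq:1}--\eqref{eq:6} in slow--fast form and then apply Tikhonov's theorem. For the slow variables I take the totals $S=\Sa+\Su$, $I=\Ia+\Iu$, together with $R$ and $A$. For the fast variables I take the accessible fractions $x=\Sa/S$ and $y=\Ia/I$, defined where $S,I>0$. An equivalent choice is to keep $\Sa$ and $\Ia$ directly and regard $\Su=S-\Sa$, $\Iu=I-\Ia$ as determined by them.

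Adding \eqref{eq:1}--\eqref{eq:2} and \eqref{eq:3}--\eqref{eq:4} cancels every accessibility term, exactly as in the proof of \Cref{thm:global_dfe}. This gives equations that contain no $1/\varepsilon$ factor:
\begin{equation*}
\frac{dS}{dt}=\mu-\beta\Sa\Ia-\mu S,\qquad \frac{dI}{dt}=\beta\Sa\Ia-(\gamma+\mu)I,
\end{equation*}
together with \eqref{eq:5}--\eqref{eq:6}. Hence $(S,I,R,A)$ are genuinely slow. In the $\Sa$ and $\Ia$ equations, multiplying by $\varepsilon$ leaves the leading-order fast subsystem
\begin{equation*}
\varepsilon\dot\Sa=\kappa_2(A)(S-\Sa)-\kappa_1(A)\Sa+O(\varepsilon),\qquad \varepsilon\dot\Ia=\delta_2(A)(I-\Ia)-\delta_1(A)\Ia+O(\varepsilon).
\end{equation*}
At frozen $(S,I,A)$ this subsystem is linear and decoupled, with unique equilibria
\begin{equation*}
\Sa=\frac{\kappa_2}{\kappa_1+\kappa_2}\,S=g_S(A)S,\qquad \Ia=g_I(A)I.
\end{equation*}
These give the stated limits. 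In the fast time $\tau=t/\varepsilon$ the eigenvalues are $-(\kappa_1+\kappa_2)$ and $-(\delta_1+\delta_2)$, i.e.\ $-(\kappa_1+\kappa_2)/\varepsilon$ and $-(\delta_1+\delta_2)/\varepsilon$ in $t$.

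The step needing care is showing that these eigenvalues are uniformly negative on $A\in[0,1]$. Since $\kappa_1+\kappa_2=\kzo A+\kzt(1-A)\ge\min(\kzo,\kzt)>0$ when both baselines are positive, and similarly for $\delta$, the critical manifold is normally hyperbolic and attracting uniformly on the compact set $\Gam$ from \Cref{prop:wellposed}. This is the main obstacle. The bound fails if some baseline rate vanishes; I would simply assume $\kzo,\kzt,\dzo,\dzt>0$, as the ranges in \Cref{tab:params} ensure. Working with $\Sa$ and $\Ia$ rather than the ratios $x,y$ avoids division by $S$ or $I$, which can vanish. With the slow manifold uniformly attracting, Tikhonov's theorem \citep{Verhulst2005} applies on compact subsets of $\Gam$, after an initial layer, and on finite time intervals.

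Substituting $\Sa=g_SS$ and $\Ia=g_II$ into the slow equations yields \eqref{eq:reduced} with $b=\beta g_Sg_I$; the $R$ equation decouples through $R=1-S-I$. Finally, each of $g_S$ and $g_I$ has the form $c(1-A)/(c'A+c(1-A))$, i.e.\ $1/\big(1+(c'/c)A/(1-A)\big)$. Each is therefore positive and strictly decreasing on $[0,1)$ whenever $c'=\kzo>0$ (resp.\ $\dzo>0$), and equals $0$ at $A=1$. The product of two nonnegative decreasing functions, at least one strictly decreasing and both positive on $[0,1)$, is strictly decreasing on $[0,1]$, which gives the monotonicity of $b$.
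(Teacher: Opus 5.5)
Your proposal is correct and follows the paper's own route. The paper embeds its argument in the statement: a quasi-steady-state reduction of the linear accessibility subsystem at frozen $(S,I,A)$, negative fast eigenvalues $-(\kappa_1+\kappa_2)/\varepsilon$ and $-(\delta_1+\delta_2)/\varepsilon$, and Tikhonov's theorem applied to the slow totals, whose equations carry no $1/\varepsilon$. Your explicit assumption $\kzo,\kzt,\dzo,\dzt>0$ and your check that $b$ is strictly decreasing supply details the paper leaves implicit; the assumption is needed both for the eigenvalues to stay bounded away from zero and for $b$ to be strictly decreasing.
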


System \eqref{eq:reduced} is exactly the classical awareness model in which risk perception scales the transmission rate. The class of transmission-modulating awareness models is therefore the singular limit of the accessibility framework obtained when behavioral relaxation is instantaneous, and every difference between the two must originate in finite behavioral residence times. The next theorem shows that this difference is not cosmetic.

\begin{theorem}[Memoryless awareness cannot destabilise the endemic state]
\label{thm:memoryless}
Consider
\begin{equation}
\frac{dS}{dt}=\mu-b(A)SI-\mu S,\quad
\frac{dI}{dt}=\big(b(A)S-(\gamma+\mu)\big)I,\quad
\frac{dA}{dt}=\Psi(I,A),
\end{equation}
where $b\in C^{1}$ is positive and nonincreasing and $\Psi\in C^{1}$ satisfies $\partial_I\Psi>0$ and $\partial_A\Psi<0$. Then every endemic equilibrium with $I^{*}>0$ is locally asymptotically stable.
\end{theorem}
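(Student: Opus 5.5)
The plan is to linearise at an arbitrary endemic equilibrium $(S^{*},I^{*},A^{*})$ with $I^{*}>0$ and show, via the Routh--Hurwitz criterion, that every root of the resulting cubic characteristic polynomial has negative real part. The equilibrium relations we will use are $b(A^{*})S^{*}=\gamma+\mu$ and $\mu-b(A^{*})S^{*}I^{*}-\mu S^{*}=0$. The first of these makes the $(I,I)$ entry of the Jacobian vanish.

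We introduce the shorthand
\[
b=b(A^{*}),\qquad b'=b'(A^{*})\le0,\qquad p=\partial_I\Psi>0,\qquad q=-\partial_A\Psi>0,
\]
with all derivatives evaluated at the equilibrium. With this notation the Jacobian is
\[
J=\begin{pmatrix}-(bI^{*}+\mu) & -bS^{*} & -b'S^{*}I^{*}\\ bI^{*} & 0 & b'S^{*}I^{*}\\ 0 & p & -q\end{pmatrix}.
\]
We also set $m=bI^{*}+\mu>0$ and $c=-b'S^{*}I^{*}\ge0$. Then $J$ has diagonal $(-m,0,-q)$, and its off-diagonal entries are $J_{12}=-bS^{*}$, $J_{13}=c$, $J_{21}=bI^{*}$, $J_{23}=-c$ and $J_{32}=p$.

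Writing the characteristic polynomial as $\lambda^{3}+a_1\lambda^{2}+a_2\lambda+a_3$, we compute the coefficients one by one. The trace gives $a_1=m+q>0$. The sum of the principal $2\times2$ minors gives
\[
a_2 = b^{2}S^{*}I^{*}+mq+pc>0.
\]
For the determinant, we expand along the first column:
\[
\det J=-m\,(0\cdot(-q)+cp)-bI^{*}\big((-bS^{*})(-q)-cp\big).
\]
This simplifies to $\det J=-mcp-b^{2}S^{*}I^{*}q+bI^{*}cp$, and hence
\[
a_3=-\det J=cp(m-bI^{*})+b^{2}S^{*}I^{*}q=\mu cp+b^{2}S^{*}I^{*}q>0.
\]
I will recheck these sign computations carefully, since they are the error-prone part of the argument.

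The main step is the Routh--Hurwitz inequality $a_1a_2>a_3$. Expanding,
\[
a_1a_2=(m+q)\big(b^{2}S^{*}I^{*}+mq+pc\big).
\]
Its summand $q\,b^{2}S^{*}I^{*}$ dominates the term $b^{2}S^{*}I^{*}q$ in $a_3$. Its summand $m\,pc$ dominates the term $\mu cp$ in $a_3$, because $m=bI^{*}+\mu\ge\mu$. All remaining summands of $a_1a_2$ are nonnegative, and $m^{2}q>0$ is strict. Hence $a_1a_2>a_3$, and Routh--Hurwitz yields local asymptotic stability.

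The feature to emphasise is that the potentially destabilising feedback $c$ enters $a_3$ only through $\mu cp$, weighted by $\mu$. In $a_1a_2$ it appears with the larger weight $m$, and this is exactly what the comparison exploits. The case $b'=0$, i.e.\ $c=0$, is included and reduces to the classical SIR argument. If positivity of $a_3$ were to fail in some sign convention, I would redo the expansion along the second row as a check.
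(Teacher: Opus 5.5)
Your proof is correct and follows essentially the same route as the paper: the same Jacobian at $(S^{*},I^{*},A^{*})$, the same Routh--Hurwitz coefficients, and the same key observation. (Your $b^{2}S^{*}I^{*}$ equals the paper's $(\gamma+\mu)\,b(A^{*})I^{*}$ via $b(A^{*})S^{*}=\gamma+\mu$.) That observation is that the feedback term enters $a_3$ with weight $\mu$ but enters $a_1a_2$ with the larger weight $b(A^{*})I^{*}+\mu$. Your sign computations all check out, so the hedges about rechecking are unnecessary; state explicitly that $S^{*}=(\gamma+\mu)/b(A^{*})>0$, since this is what makes $b^{2}S^{*}I^{*}q$, and hence $a_3$, strictly positive.
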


\begin{proof}
Write $q=b(A^{*})I^{*}>0$, $a_1=q+\mu$, $c=\gamma+\mu$, $m=-b'(A^{*})S^{*}I^{*}\ge0$, $p=\partial_I\Psi>0$, $w=-\partial_A\Psi>0$, and use $b(A^{*})S^{*}=c$. The Jacobian in $(S,I,A)$ is
\begin{equation}
J=\begin{pmatrix}-a_1 & -c & m\\ q & 0 & -m\\ 0 & p & -w\end{pmatrix},
\end{equation}
with characteristic polynomial $\lambda^{3}+A_1\lambda^{2}+A_2\lambda+A_3$ where
\begin{equation}
A_1=a_1+w,\qquad A_2=cq+a_1w+mp,\qquad A_3=cwq+m\mu p,
\end{equation}
all strictly positive, and
\begin{equation}
A_1A_2-A_3=a_1cq+a_1^{2}w+a_1w^{2}+mp\,(q+w)>0 .
\end{equation}
The Routh--Hurwitz conditions hold for all admissible values, so all eigenvalues have negative real part. The $R$ equation decouples with eigenvalue $-\mu$.
\end{proof}

\Cref{thm:memoryless} is deliberately stated for a broad class: any prevalence-driven scalar awareness variable acting multiplicatively on transmission produces a stable endemic state, whatever the shape of $b$ and whatever the awareness kinetics, provided awareness rises with prevalence and decays with itself. Sustained oscillations in that class therefore require something extra, and in the literature that extra ingredient is always a lag: distributed delays or higher-order memory kernels in the information index \citep{dOnofrio2009}, delayed information availability \citep{Weitz2020,Mahmud2025}, or two-level awareness with imported cases \citep{Juher2020,Juher2023}. \Cref{sec:hopf} shows that the accessibility formulation supplies that lag endogenously, through behavioral residence times, and can destabilise the endemic equilibrium with no delay imposed anywhere in the model.

\subsection{Loss of stability of the endemic equilibrium}
\label{sec:hopf}

The characteristic polynomial at $E^{*}$ has coefficients that depend on the parameters through the implicitly defined equilibrium, so a closed-form Routh--Hurwitz criterion is not available. We therefore evaluated the spectrum of $J(E^{*})$ numerically after locating $E^{*}$ by bisection on the strictly monotone function $G$, which makes the equilibrium computation exact to machine precision and free of the ambiguity associated with long transient simulation.

For the baseline parameters of \Cref{tab:params} the eigenvalues are $-0.420$, $-0.371$, $-0.020$, $-0.019$ and $-0.0086\pm0.0601i$. The endemic equilibrium is stable, and the leading complex pair predicts damped waves with period $2\pi/0.0601\approx105$ days and relaxation time $1/0.0086\approx116$ days, which is precisely the transient structure seen in \Cref{fig:fig2,fig:fig3}. Stability is, however, not universal. Over $4\times10^{4}$ parameter sets drawn uniformly from the ranges in \Cref{tab:params}, $0.035\%$ gave an unstable $E^{*}$; over the same number drawn from the wider ranges $\beta\in[0.25,3]$, $\gamma\in[0.05,0.5]$, $\mu\in[0.002,0.1]$, $\kzo,\kzt,\dzo,\dzt\in[0,1]$, $\rho\in[0.01,3]$, $\omega\in[10^{-4},1]$, the proportion was $0.25\%$. Instability is confined to a well-defined corner of parameter space: fast withdrawal, slow return, strong awareness response, and weak fatigue.

Fixing $\beta=1.8$, $\dzt=0.01$, $\rho=1$, $\omega=0.01$ and the baseline $\kappa$ values, the leading pair is $-0.00366\pm0.1110i$ at $\dzo=0.30$ and $+0.00251\pm0.1115i$ at $\dzo=0.50$, while $\det J(E^{*})$ remains positive and bounded away from zero along the whole path. No real eigenvalue can therefore cross, and the loss of stability occurs through a complex pair crossing the imaginary axis at $\dzo\approx0.4038$, that is, through a Hopf bifurcation with period $2\pi/0.1114\approx56$ days at onset. Numerical eigenvalue analysis locates the stability loss near $\dzo\approx0.4038$, where a complex-conjugate pair crosses the imaginary axis with period $2\pi/0.1114\approx56$ days at onset. Direct numerical integration provides evidence that the transition is supercritical: the amplitude of the emerging periodic orbit increases continuously from zero (\Cref{fig:fig8}d), and no coexisting periodic attractor was detected on the stable side over the initial conditions examined. Thus, the numerical results are consistent with a supercritical Hopf bifurcation and provide no evidence of bistability in the explored parameter regime.

The mechanism is delayed re-entry generated by finite behavioral relaxation. An accessibility compartment stores individuals who withdraw when awareness is high and releases them according to the awareness-dependent return rates $\kappa_2(A)=\kzt(1-A)$ and $\delta_2(A)=\dzt(1-A)$. Thus, $1/\kzt$ and $1/\dzt$ define baseline relaxation timescales, while at fixed $A<1$ the corresponding instantaneous return timescales are $[\kzt(1-A)]^{-1}$ and $[\dzt(1-A)]^{-1}$. Because re-entry is not instantaneous, the accessibility compartments introduce an endogenous phase lag between prevalence, awareness, and effective contact. As awareness declines, previously withdrawn individuals progressively return to the accessible population, replenishing the pool available for transmission. In the fast-relaxation limit this storage effect vanishes, the endogenous lag disappears, and \Cref{thm:memoryless} applies. This is the concrete sense in which awareness-driven accessibility is not merely a reparameterisation of awareness-driven transmission reduction.

\begin{remark}
Because $E^{*}$ is unique for all $\Rzero>1$ (\Cref{thm:endemic}) and
infection is uniformly persistent (\Cref{thm:persistence}), loss of stability
of $E^{*}$ cannot lead to convergence to the disease-free equilibrium or to
a second endemic equilibrium. The numerically observed stable periodic orbit
is therefore consistent with the persistence and uniqueness results, although
a complete analytical characterization of the post-bifurcation attractor is
beyond the scope of the present study.
\end{remark}

\section{Numerical results}
\label{sec:results}

All numerical computations were performed in \texttt{[SOFTWARE/LANGUAGE,
VERSION]}. Trajectories were integrated using the adaptive stiff solver
\texttt{[SOLVER NAME]}, with relative and absolute tolerances of
$10^{-10}$ and $10^{-12}$, respectively, from the initial condition
\[
\Sa(0)=0.9,\qquad
\Su(0)=0.05,\qquad
\Ia(0)=0.05,\qquad
\Iu(0)=R(0)=A(0)=0.
\]
Endemic equilibria were computed independently by bisection applied to the
strictly monotone scalar function $G$ in \eqref{eq:G} and were verified by
substitution into \eqref{eq:1}--\eqref{eq:6}. Local stability was assessed
from the eigenvalues of the Jacobian evaluated at the computed equilibrium.
Unless stated otherwise, all parameters take the baseline values listed in
\Cref{tab:params}.

\subsection*{Awareness feedback, transient waves, and accessibility}

Awareness-driven accessibility alone is sufficient to generate multiple epidemic waves without seasonality, importation, or waning immunity. Following the initial outbreak, increasing prevalence raises awareness, which promotes withdrawal from accessible states and reduces the population available for transmission. As prevalence declines, awareness decays through behavioral fatigue, inaccessible susceptible individuals gradually return to accessibility, and transmission can resume, producing subsequent damped waves. At baseline, the complex eigenvalue pair at $E^{*}$ predicts an oscillation period of approximately $105$ days and a relaxation time of approximately $116$ days, consistent with the simulated inter-wave spacing and decay envelope. The transient dynamics are strongly modulated by the behavioral parameters: increasing the awareness response $\rho$ initiates withdrawal earlier and produces deeper, longer low-prevalence intervals, whereas increasing fatigue $\omega$ shortens these intervals and accelerates convergence toward a higher endemic level. Accessibility rates produce a complementary storage-and-release mechanism: faster susceptible withdrawal suppresses post-outbreak transmission, while slow return to accessibility allows susceptible individuals to accumulate in $\Su$ and subsequently re-enter the transmitting population, generating pronounced recurrent rebounds. At equilibrium, the numerical results also confirm the monotonicity established in \Cref{thm:monotone}: increasing the withdrawal rates $\kzo$ and $\dzo$ reduces endemic prevalence, whereas increasing the return rates $\kzt$ and $\dzt$ increases it. At baseline, $I^{*}=0.0155$, compared with $I^{*}_{\mathrm{SIR}}=0.0576$, corresponding to a $73\%$ reduction in endemic prevalence without any modification of transmissibility per contact. Detailed transient trajectories and accessibility-rate analyses are provided in Appendix Figures~\ref{fig:fig2}--\ref{fig:fig4}.

\subsection*{Awareness response, fatigue, and transmission}

Endemic prevalence decreases and endemic awareness increases with the awareness response $\rho$, with both effects saturating as $\rho$ becomes large (\Cref{fig:fig5}a,d). This saturation follows from \Cref{thm:lowfatigue}: at fixed $\omega$, increasing $\rho$ decreases $\theta=\omega/\rho$, so $A^{*}$ approaches the plateau $A_{\infty}=0.514$, while prevalence approaches zero linearly in $\theta$ rather than discontinuously. Behavioral fatigue acts in the opposite direction and has its strongest effect near $\omega=0$ (\Cref{fig:fig5}b,e). The low-fatigue expansion in \Cref{thm:lowfatigue} explains this sensitivity: small increases in fatigue can substantially increase endemic prevalence by accelerating the return to accessible states. Increasing the transmission rate $\beta$ raises both endemic prevalence and endemic awareness (\Cref{fig:fig5}c,f), because awareness is generated by the prevalence that it subsequently suppresses. Both branches emerge continuously from zero at $\beta=\gamma+\mu$, corresponding to $\Rzero=1$, confirming that behavioral parameters affect post-invasion dynamics but do not shift the invasion threshold.

\begin{figure}[!ht]
\centering
\includegraphics[
    width=0.96\textwidth,
    height=0.70\textheight,
    keepaspectratio
]{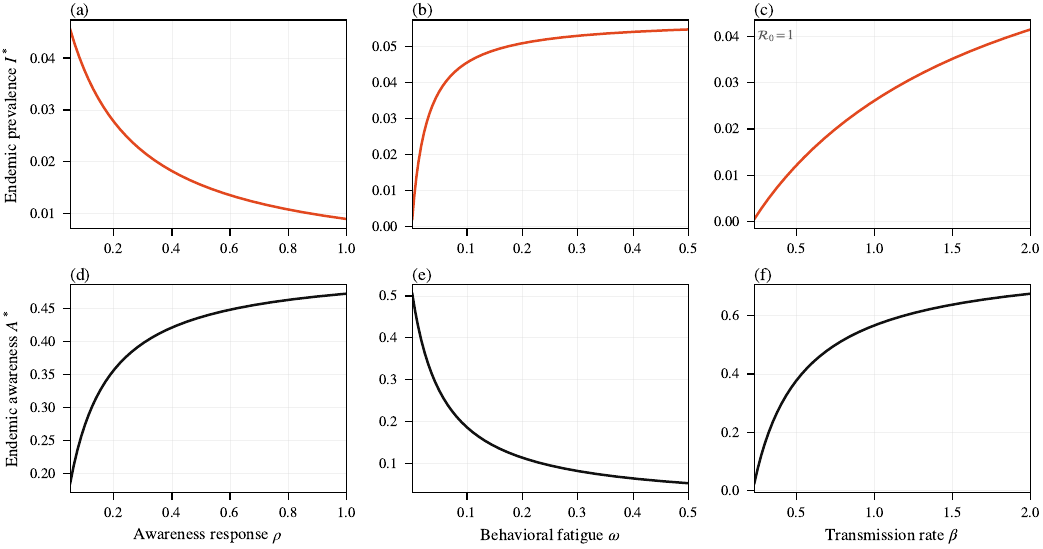}
\caption{\textbf{Endemic prevalence and awareness as functions of behavioral and transmission parameters.} (a--c) Endemic prevalence $I^{*}$ against awareness response $\rho$, behavioral fatigue $\omega$, and transmission rate $\beta$. (d--f) Corresponding endemic awareness $A^{*}$. Remaining parameters are as in \Cref{tab:params}. The dotted line in panels (c) and (f) marks $\beta=\gamma+\mu$, where $\Rzero=1$; both branches emerge from zero at this threshold for every behavioral parameter set, illustrating threshold neutrality.}
\label{fig:fig5}
\end{figure}

\subsection*{Joint dependence and the fatigue-to-response ratio}

The joint equilibrium maps in \Cref{fig:fig6} demonstrate that infection and awareness coexist throughout the parameter region with $\Rzero>1$ and reveal the structure predicted by \Cref{thm:theta}. In the $(\omega,\rho)$ plane, the level sets of both $I^{*}$ and $A^{*}$ follow rays through the origin, showing that the equilibrium depends on awareness response and fatigue only through the ratio $\theta=\omega/\rho$. Consequently, a slower awareness response can produce the same endemic burden as a faster response provided that behavioral fatigue decreases proportionally. This invariance does not extend to parameter planes involving $\beta$: awareness response is most effective at limiting prevalence at low-to-moderate transmission intensity (\Cref{fig:fig6}b), whereas increasing fatigue rapidly moves prevalence toward its unmitigated level (\Cref{fig:fig6}c). Correspondingly, high endemic awareness occurs primarily when the response is strong and fatigue is weak (\Cref{fig:fig6}d), while increasing fatigue suppresses awareness across the full range of transmission rates (\Cref{fig:fig6}f).

\begin{figure}[!ht]
\centering
\includegraphics[
    width=\textwidth,
    height=0.60\textheight,
    keepaspectratio
]{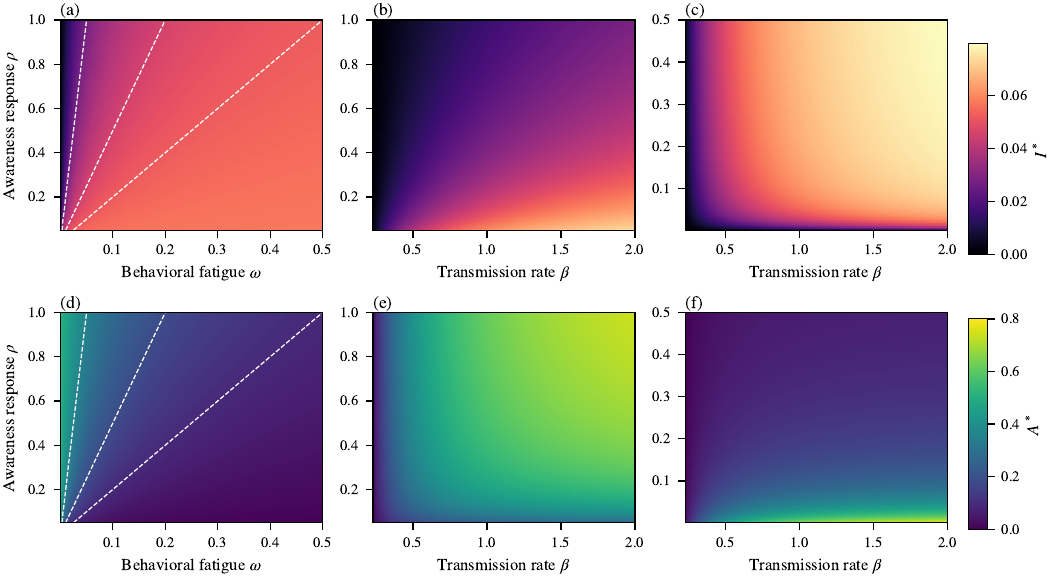}
\caption{\textbf{Endemic equilibria across behavioral and transmission parameters.} (a--c) Endemic prevalence $I^{*}$ as a function of (a) $\omega$ and $\rho$, (b) $\beta$ and $\rho$, and (c) $\beta$ and $\omega$. (d--f) Endemic awareness $A^{*}$ for the corresponding parameter pairs. White dashed lines in panels (a) and (d) denote the rays $\rho/\omega\in\{2,5,20\}$; the level sets follow these rays exactly, as predicted by \Cref{thm:theta}. Remaining parameters are as in \Cref{tab:params}.}
\label{fig:fig6}
\end{figure}

\Cref{fig:fig7} provides direct numerical verification of the analytical endemic results. Prevalence curves computed independently for $\rho\in\{0.1,0.5,2\}$ collapse onto a single function of $\theta$ to machine precision (\Cref{fig:fig7}a), while the low-fatigue approximation 
$I^{*}\simeq\theta A_{\infty}/(1-A_{\infty})$ 
remains accurate over four decades of $\theta$ (\Cref{fig:fig7}b). The exact decomposition in \eqref{eq:decomp} further shows that susceptible withdrawal and infectious sequestration together account for the entire reduction relative to the SIR benchmark (\Cref{fig:fig7}c), with susceptible withdrawal providing the dominant contribution. Both effects weaken as fatigue increases, and by $\theta\approx0.3$ most of the prevalence reduction associated with awareness has been lost. Finally, \Cref{fig:fig7}d shows that the relative suppression $I^{*}/I^{*}_{\mathrm{SIR}}$ deteriorates as transmission intensity increases, because prevalence rises more rapidly with $\beta$ than the induced awareness can compensate.

\begin{figure}[!ht]
\centering
\includegraphics[
    width=0.86\textwidth,
    height=0.70\textheight,
    keepaspectratio
]{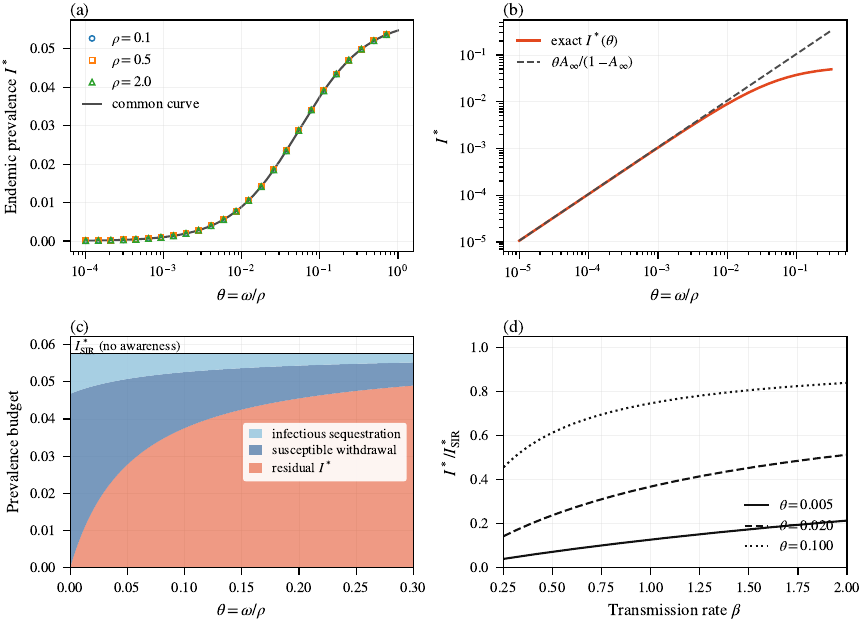}
\caption{\textbf{Numerical verification of the analytical endemic results.} (a) Endemic prevalence computed independently for $\rho=0.1,0.5,2$ collapses onto a single curve in $\theta=\omega/\rho$, as predicted by \Cref{thm:theta}. (b) The same data on logarithmic axes compared with the low-fatigue law $I^{*}=\theta A_{\infty}/(1-A_{\infty})$, with $A_{\infty}=0.514$ from \eqref{eq:Ainf}. (c) Exact decomposition \eqref{eq:decomp} of the difference between $I^{*}_{\mathrm{SIR}}$ and $I^{*}$ into susceptible withdrawal and infectious sequestration as a function of $\theta$. (d) Relative endemic burden $I^{*}/I^{*}_{\mathrm{SIR}}$ against $\beta$ for three values of $\theta$. Remaining parameters are as in \Cref{tab:params}.}
\label{fig:fig7}
\end{figure}

\subsection*{From damped waves to sustained cycles}

Across the baseline parameter ranges, the waves generated by awareness feedback are transient and the system ultimately approaches $E^{*}$. This behavior is not, however, a structural requirement of the model. \Cref{fig:fig8}a maps the leading eigenvalue of $J(E^{*})$ over the plane of infectious withdrawal and return rates and identifies a stability boundary separating the regime of fast withdrawal and slow return from the stable region. In contrast, the corresponding calculation for the fast-relaxation limit \eqref{eq:reduced} yields a negative leading eigenvalue throughout the same parameter domain (\Cref{fig:fig8}b), as required by \Cref{thm:memoryless}. This contrast provides the numerical counterpart of \Cref{sec:contrast}: behavioral parameters that cannot destabilize the endemic equilibrium when represented through instantaneous transmission modulation can do so when protective behavior is represented as an occupied state with a finite residence time.

\begin{figure}[!ht]
\centering
\includegraphics[
    width=\textwidth,
    height=0.70\textheight,
    keepaspectratio
]{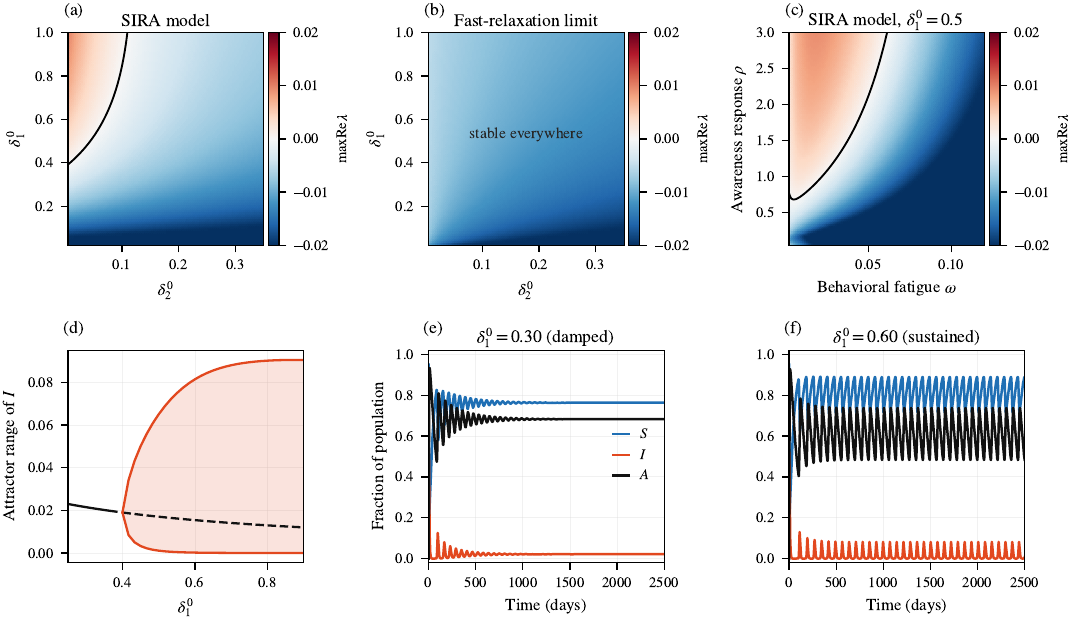}
\caption{\textbf{Behavioral residence times can destabilize the endemic equilibrium.} (a) Leading eigenvalue $\max\operatorname{Re}\lambda$ of $J(E^{*})$ over the $(\dzt,\dzo)$ plane at $\beta=1.8$, $\rho=1$, $\omega=0.01$, and $\kzo=\kzt=0.4$; the black curve denotes the Hopf boundary. (b) The corresponding calculation for the fast-relaxation limit \eqref{eq:reduced} on the identical parameter grid and color scale; the leading eigenvalue remains negative throughout, consistent with \Cref{thm:memoryless}. (c) Leading eigenvalue over the $(\omega,\rho)$ plane at $\dzo=0.5$, showing that instability requires a fast awareness response combined with weak fatigue. (d) Bifurcation diagram in $\dzo$: the black curve denotes $I^{*}$ (solid where stable and dashed where unstable), while the red curves give the extrema of $I$ on the attractor, computed by integration to $t=3\times10^{4}$. (e,f) Trajectories below and above the Hopf boundary at $\dzo=0.30$ and $\dzo=0.60$, respectively, with all other parameters as in panel (a).}
\label{fig:fig8}
\end{figure}

The bifurcation diagram in \Cref{fig:fig8}d is consistent with a supercritical Hopf bifurcation near $\dzo\approx0.404$: the numerically observed periodic-orbit amplitude increases continuously from zero as the stability boundary is crossed, and no coexisting attractor was detected over the initial conditions examined. Below the transition, trajectories approach the endemic equilibrium after several damped waves (\Cref{fig:fig8}e), whereas above it, prevalence, susceptibility, and awareness exhibit sustained oscillations with a period of several months and amplitudes substantially larger than the corresponding equilibrium values (\Cref{fig:fig8}f).

 The stability map in \Cref{fig:fig8}c identifies the same transition in the $(\omega,\rho)$ plane and shows that instability occurs when rapid awareness response is combined with weak fatigue, corresponding to a regime of strong behavioral storage and weak dissipation. Importantly, the collapse onto $\theta=\omega/\rho$ established in \Cref{thm:theta} applies to the equilibrium coordinates but not to their stability: the Hopf boundary is not a ray because $\rho$ and $\omega$ enter the Jacobian separately. Thus, the equilibrium location is determined by their ratio, whereas the local dynamics and stability around that equilibrium depend on the two rates individually.

In the cycling regime, prevalence can fall below $10^{-4}$ during inter-wave troughs (\Cref{fig:fig8}f). In a finite population, such low prevalence would imply a substantial probability of stochastic fade-out. The deterministic prediction of sustained oscillations should therefore be interpreted as a mechanism for recurrent, behaviorally generated epidemic waves, with sufficiently deep troughs potentially permitting local extinction and subsequent reintroduction.

\section{Discussion}
\label{sec:discussion}

\subsection*{Behavior as a dynamical state}

Behavioral epidemic models commonly represent risk response by reducing the transmission coefficient \citep{Funk2009,dOnofrio2009,Fenichel2011,Weitz2020,Juher2023,Mahmud2025}. Our analysis clarifies what is lost under this representation. \Cref{prop:fastlimit} shows that transmission modulation arises as the instantaneous-relaxation limit of the accessibility model, when behavioral withdrawal has no residence time. Within the scalar, prevalence-driven feedback class considered here, \Cref{thm:memoryless} shows that this limit cannot destabilize the endemic equilibrium. Recurrent waves in related models therefore require additional temporal structure, such as memory kernels \citep{dOnofrio2009}, delayed information \citep{Weitz2020,Mahmud2025}, or multiple awareness levels \citep{Juher2020,Juher2023}. In contrast, the accessibility model generates an endogenous lag through finite residence times in behavioral states. Numerical bifurcation analysis shows that this mechanism alone can produce a supercritical Hopf bifurcation and sustained epidemic oscillations.

This distinction provides a testable rationale for state-based representations \citep{Epstein2008,Perra2011} when epidemic recurrence is of interest. It also identifies a measurable quantity often absent from transmission-modulating models: the duration of protective behavior, rather than only the strength of the initial response.

\subsection*{Coexistence and self-limiting awareness}

Uniform persistence (\Cref{thm:persistence}) establishes sustained coexistence of infection and awareness when $\Rzero>1$. Because awareness is generated by prevalence, however, the feedback is intrinsically self-limiting: declining prevalence weakens the behavioral response responsible for that decline. Awareness can therefore suppress endemic infection but cannot eliminate it within this model.

The decomposition \eqref{eq:decomp} quantifies this suppression. At baseline, behavior reduces endemic prevalence by $73\%$ relative to the corresponding SIR model, with approximately four fifths attributable to susceptible withdrawal and one fifth to infectious sequestration. This exact decomposition separates reductions in exposure from reductions in infectious accessibility and clarifies their distinct contributions to endemic burden.

\subsection*{Fatigue as a long-term constraint}

\Cref{thm:theta,thm:lowfatigue} show that the endemic state depends on awareness response and fatigue only through $\theta=\omega/\rho$. Moreover, $I^{*}$ decreases linearly with $\theta$ as $\theta\to0$, while awareness approaches a limiting plateau. Thus, faster awareness response and slower behavioral fatigue have equivalent effects on the equilibrium through their ratio, although they need not have equivalent transient or stability effects.

The low-fatigue asymptotics further show that small increases in fatigue can produce substantial increases in endemic prevalence near $\theta=0$. The durability of protective behavior is therefore a key determinant of long-run disease burden, consistent with empirical evidence emphasizing the persistence of behavioral responses during prolonged epidemics \citep{Jorgensen2022,Chen2024}.

\subsection*{Implications for epidemic control}

Threshold neutrality (\Cref{prop:R0}) limits the role of awareness: because awareness vanishes at the disease-free state, it does not alter the invasion threshold. Its principal effect is instead to reduce accessible susceptibility and infectious accessibility after invasion, thereby lowering endemic burden and reshaping epidemic timing. As illustrated in \Cref{fig:fig2}a--d, stronger awareness responses can substantially increase the interval between epidemic waves.

Awareness should therefore complement, rather than replace, interventions that directly reduce the invasion threshold, such as vaccination or treatment. By suppressing transmission after invasion and delaying subsequent waves, behavioral responses may provide additional time for such interventions to be deployed.

Finite behavioral residence times also introduce a potential trade-off. In the numerically identified unstable regime, lower endemic burden can coexist with large recurrent peaks and deep troughs. Equilibrium-based assessments may therefore understate temporal variability. The same combination of rapid withdrawal, slow return, strong awareness response, and weak fatigue that strongly suppresses prevalence can also promote oscillatory dynamics, emphasizing the importance of evaluating both burden and stability.

\subsection*{Limitations and future directions}

The model assumes homogeneous mixing among accessible individuals and a single population-level awareness variable, excluding spatial structure, contact-network heterogeneity, and variation in individual risk responses \citep{Granell2013,Wang2015}. Awareness responds directly to prevalence, omitting reporting delays and media-driven information dynamics. Recovered individuals have no accessibility states, an assumption that would require modification under waning immunity. The deterministic formulation also excludes stochastic fade-out, particularly relevant during the deep troughs predicted in the oscillatory regime. Finally, the Hopf transition and its apparent supercritical character are established numerically; a normal-form analysis of the six-dimensional system remains open.

Natural extensions include heterogeneous accessibility across demographic or socioeconomic groups \citep{Bish2010}, explicit information dynamics with reporting delays, and stochastic formulations that can resolve extinction during low-prevalence periods. Incorporating vaccination or treatment \citep{Bauch2004,Manfredi2009,Betsch2020} would further allow threshold-neutral behavioral responses to be studied jointly with interventions that directly modify the invasion threshold.

\section{Conclusion}
\label{sec:conclusion}

We developed an epidemic framework in which risk awareness governs occupancy of accessible and inaccessible behavioral states rather than directly modifying transmission. The analysis establishes a unique endemic equilibrium for $\Rzero>1$, uniform persistence of infection and awareness, an exact decomposition of behavioral reductions in endemic burden, and a reduction of the endemic state to the ratio $\theta=\omega/\rho$. Awareness is therefore threshold-neutral but can substantially reduce post-invasion disease burden, with behavioral fatigue limiting its long-term effect. In the instantaneous-relaxation limit, the model reduces to the classical transmission-modulating formulation, whose endemic equilibrium is locally asymptotically stable under the stated assumptions. In contrast, finite behavioral relaxation introduces an endogenous phase lag, and numerical bifurcation analysis shows that this mechanism can destabilize the endemic equilibrium and generate self-sustained epidemic cycles without imposed delays, memory kernels, or seasonal forcing. These results identify the duration and persistence of protective behavior as key determinants of both endemic burden and epidemic dynamics.
\section*{Conflict of Interest}
The authors declare no conflict of interest.

\section*{Funding}
This research received no external funding.

\section*{Data and Code Availability}
No empirical data were collected or analyzed in this study. The source code used to generate the numerical simulations, stability analyses, parameter sweeps, and figures is available from the corresponding author upon reasonable request.


\appendix
\section{Supplementary numerical results}
\label{app:numerical}

The supplementary simulations in
Figures~\ref{fig:fig2}--\ref{fig:fig4}
provide additional detail on the transient and equilibrium effects
of the behavioral parameters.

\begin{figure}[!htbp]
\centering
\includegraphics[
    width=\textwidth,
    height=0.70\textheight,
    keepaspectratio
]{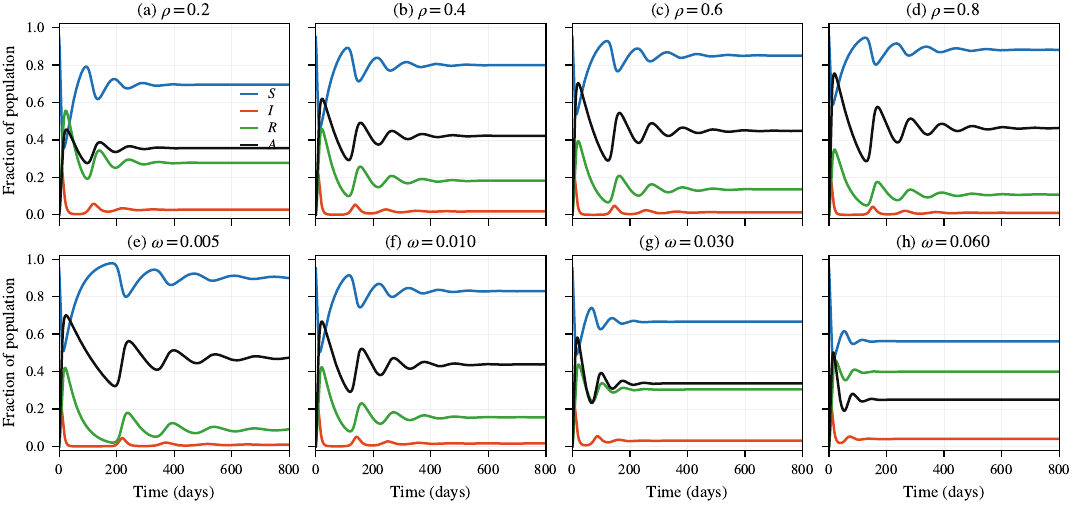}
\caption{\textbf{Awareness response and behavioral fatigue shape the
transient wave structure.} ...}
\label{fig:fig2}
\end{figure}

\begin{figure}[!htbp]
\centering
\includegraphics[
    width=\textwidth,
    height=0.70\textheight,
    keepaspectratio
]{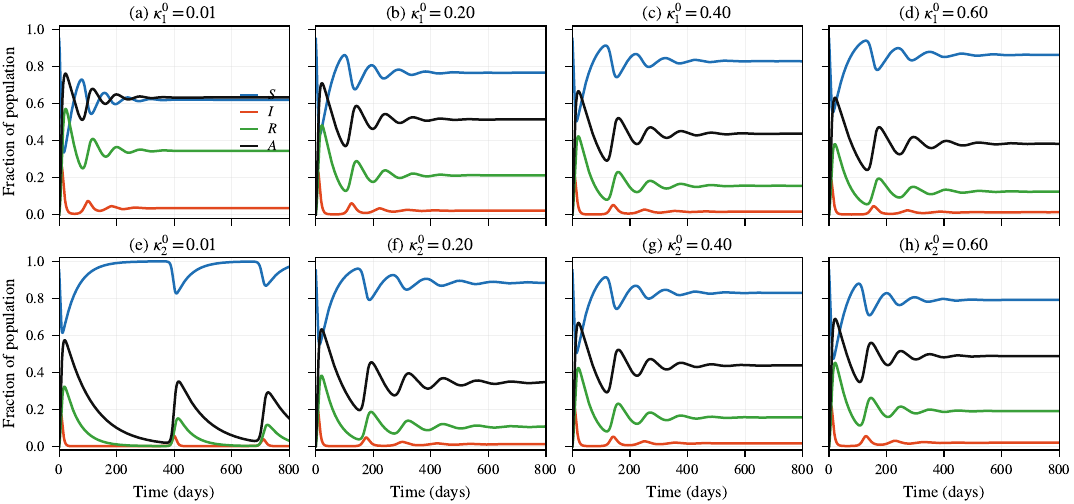}
\caption{\textbf{Susceptible accessibility rates control the damping
of epidemic waves.} ...}
\label{fig:fig3}
\end{figure}

\begin{figure}[!htbp]
\centering
\includegraphics[
    width=0.92\textwidth,
    height=0.70\textheight,
    keepaspectratio
]{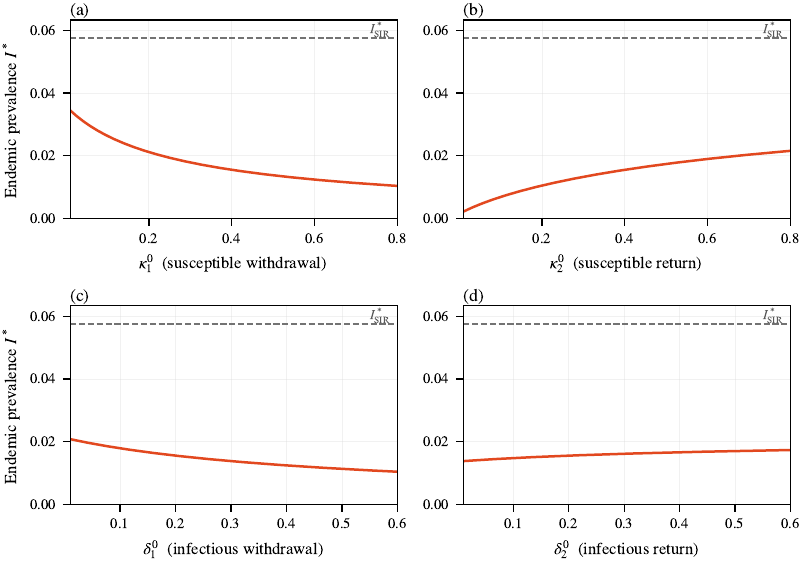}
\caption{\textbf{Endemic prevalence as a function of the four
accessibility rates.} ...}
\label{fig:fig4}
\end{figure}
\newpage


\bibliographystyle{apalike}

\end{document}